\newtheorem{thm}{Theorem}
\newtheorem{cor}{Corollary}
\begin{document}

\title{Exact volume-law entangled zero-energy eigenstates in a large class of spin models}

\author{Sashikanta Mohapatra\orcidlink{0009-0001-0837-9604}}
\email{sashikanta@imsc.res.in}
\affiliation{Institute of Mathematical Sciences, CIT Campus, Chennai 600113, India}
\affiliation{Homi Bhabha National Institute, Training School Complex, Anushaktinagar, Mumbai 400094, India} 

\author{Sanjay Moudgalya}
\email{sanjay.moudgalya@gmail.com}
\affiliation{School of Natural Sciences, Technische Universit\"{a}t M\"{u}nchen (TUM), James-Franck-Str. 1, 85748 Garching, Germany}
\affiliation{Munich Center for Quantum Science and Technology (MCQST), Schellingstr. 4, 80799 M\"{u}nchen, Germany}

\author{Ajit C. Balram\orcidlink{0000-0002-8087-6015}}
\email{cb.ajit@gmail.com}
\affiliation{Institute of Mathematical Sciences, CIT Campus, Chennai 600113, India}
\affiliation{Homi Bhabha National Institute, Training School Complex, Anushaktinagar, Mumbai 400094, India}

\date{\today}

\begin{abstract}
Exact solutions for excited states in non-integrable quantum Hamiltonians have revealed novel dynamical phenomena that can occur in quantum many-body systems. This work proposes a method to analytically construct a specific set of volume-law-entangled zero-energy exact excited eigenstates in a large class of spin Hamiltonians. In particular, we show that all spin chains that satisfy a simple set of conditions host exact volume-law zero-energy eigenstates in the middle of their spectra. Examples of physically relevant spin chains of this type include the transverse-field Ising model, PXP model, spin-$S$ $XY$ model, and spin-$S$ Kitaev chain. Although these eigenstates are highly atypical in their structure, they are thermal with respect to local observables. Our framework also unifies many recent constructions of volume-law entangled eigenstates in the literature. Finally, we show that a similar construction also generalizes to spin models on graphs in arbitrary dimensions.
\end{abstract}

\maketitle

\textbf{\textit{Introduction.}} 
The emergence of thermodynamics from unitary dynamics in isolated quantum many-body systems~\cite{deutsch1991quantum, rigol2008thermalization} provides a fascinating puzzle in contemporary physics. At the heart of this enigma lies the Eigenstate Thermalization Hypothesis (ETH)~\cite{srednicki1994chaos, rigol2012alternatives, deutsch2018eigenstate, kim2014testing, polkovnikov2011colloquium}, which posits that all the eigenstates of a quantum many-body system exhibit thermal properties for macroscopic observables. Thus, a comprehensive understanding of thermalization necessitates venturing beyond the traditional focus on the ground- and low-lying-excited- state properties of quantum many-body systems, and developing a thorough understanding of the structure of states in the bulk of their energy spectrum. Although the special class of integrable systems~\cite{kinoshita2006quantum, calabrese2011quantum, cassidy2011generalized, vidmar2016generalized} offers a pathway to obtain solutions for excited states owing to the existence of extensive conserved quantities in them, they paradoxically serve as prime examples wherein the ETH is violated.

Obtaining analytical expressions for excited eigenstates remains a formidable challenge for generic non-integrable systems. This daunting task has spurred a plethora of research to identify tractable non-integrable models where a description beyond the low-energy states can be given~\cite{Caspers1982some, Arovas1989two, yang1989eta, Vafek2017entanglement, shiraishi2017systematic, Moudgalya2018exact, Iadecola2019exactlocalized, lin2019exact, schecter2019weak, McClarty2020disorder, lin2020quantum, Mark2020exacteigenstates, Yoshida2022exact, Iversen2022Escaping}. This quest was further fueled by the fact that many of these exact eigenstates violate the ETH, leading to anomalous non-thermal dynamics similar to those seen in experiments realizing the so-called PXP model~\cite{bernien2017probing,  su2022observation}. Hence, these eigenstates were dubbed as quantum many-body scars (QMBS)~\cite{Turner2018quantum, Turner2018weak, moudgalya2022quantum, serbyn2021quantum, chandran2022review}. Examples of exact QMBS include towers of quasiparticle states in the Affleck-Kennedy-Lieb-Tasaki spin chain~\cite{AKLT_model_1987, Moudgalya2018AKLT, mark2020unified, moudgalya2020large}, $\pi{-}$bimagnon states in the spin-$1$ $XY$ model~\cite{schecter2019weak, chattopadhyay2020quantum}, $\eta{-}$pairing states in the Hubbard model~\cite{moudgalya2020eta, mark2020eta}, isolated area-law states in a variety of models~\cite{shiraishi2017systematic} including the PXP~\cite{lin2019exact} and others~\cite{lee2020exact, banerjee2021quantum, biswas2022scars,  wildeboer2022quantum, Mizuta2020Exact, Tang2022Multimagnon, Surace2021Exact, Schindler2022Exact, Gotta2022Exact, Langlett2021Hilbert, Iversen2024Tower}. Multiple constructions that produce models hosting exact QMBS eigenstates have also been proposed~\cite{shiraishi2017systematic, moudgalya2020eta, mark2020eta, odea2020from, pakrouski2020many, pakrouski2021group, ren2020quasisymmetry, ren2021deformed, moudgalya2022exhaustive, rozon2023broken}.

However, most of these QMBS states exhibit sub-volume entanglement entropy in contrast to the volume-law predicted by ETH. For some time, the main exception was a class of QMBS called rainbow scars~\cite{langlett2022rainbow}, which do exhibit volume entanglement for some bipartitions, but the expectation value of appropriate local observables in it is still non-thermal. However, recent results have showcased exact volume-law entangled states hosted in various PXP-type models~\cite{ivanov2024volume} and several spin-$1/2$ chains with nearest-neighbor interactions~\cite{udupa2023weak, chiba2024exact}, which are ``thermal" for local observables. These examples are few and far between and have been demonstrated in specific spin models, and the precise conditions for the existence of such solvable states are still unclear.

This letter provides a generic approach and explicit sufficient conditions for identifying a class of exact excited eigenstates in a large class of spin models. We show that any translation-invariant Hamiltonian whose every term formally anti-commutes with an on-site invertible operator times complex conjugation possesses these eigenstates. We start by constructing an exact zero energy eigenstate for spin-$S$ chains with periodic boundary conditions (PBC) and then generalize it to higher dimensions. We further establish that these states show a volume-law scaling of entanglement entropy as predicted by ETH for all contiguous bipartitions. Moreover, these states exhibit thermal properties for any local observable, although they are atypical and athermal for non-local (but few-body) observables. Our construction recovers previously proposed volume-entangled states in various spin-$1/2$ models~\cite{udupa2023weak, ivanov2024volume, chiba2024exact}. Additionally, we construct analytical expressions of many new exact zero-energy states for a diverse set of spin-$S$ Hamiltonians in arbitrary dimensions for both non-integrable and integrable models.

\begin{figure}
    \centering
    \includegraphics[scale=0.15]{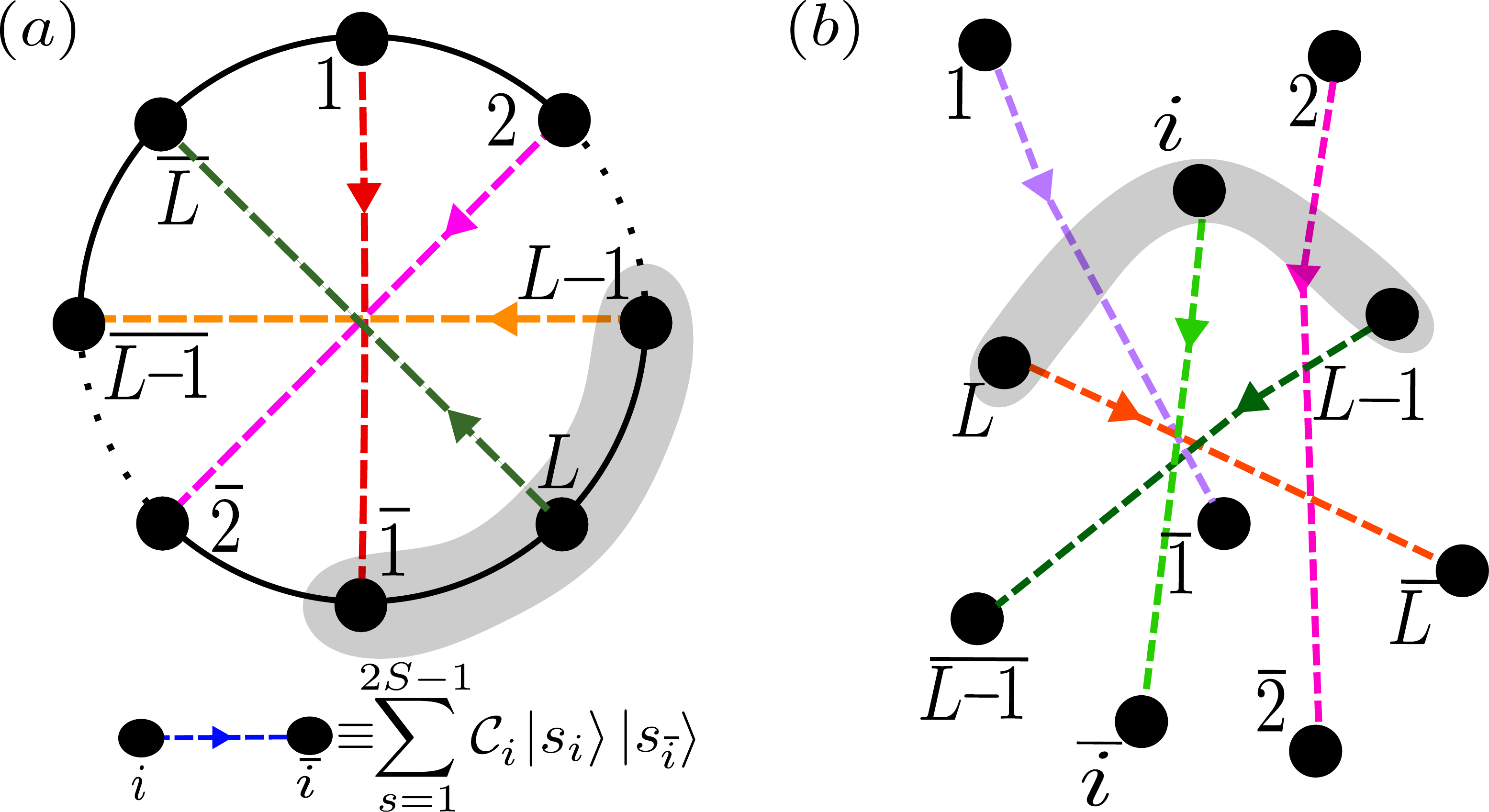}
\caption{Schematic illustration of the exact zero-energy eigenstate $|{\Lambda}{\rangle}_{\mathcal{C}}$ [defined in Eq.~\eqref{eq: zero_energystate_1d}] for a spin-$S$ system of $2L$ sites: $(a)$ on a periodic chain, and $(b)$ in arbitrary dimension or for amorphous systems. The colored dashed lines indicate entangled ``dimers" between spins at sites $i$ and $\overline{i}$. The entanglement in these dimers varies from $0$ to $\text{log}(2S{+}1)$, depending on the on-site invertible operator $\mathcal{C}_{i}$. For example, in the case of spin-$1/2$ and unitary $\mathcal{C}_{i}$, the dashed lines represent pairs of spins in one of the Bell states \{$|\Psi^{\pm}\rangle$,$|\Phi^{\pm}\rangle$\} [defined in Eq.~\eqref{eq: bell states}]. An arbitrary contiguous subsystem in (a) or the collection of sites in (b) (denoted by the shaded region) of size $N{\leq}L$ cuts exactly $N$ correlated bonds leading to an entanglement entropy $\mathcal{S}_{N}{\propto}N$.}
    \label{fig:1}
\end{figure}

\textbf{\textit{General construction in spin chains.}}
We are interested in the class of spin-$S$ chains of size $L$ with a translation invariant interaction (with PBC) of the form
\begin{equation}
    H(L){=}\sum_{\alpha,i{=}1}^{L} \hat{H}^{\alpha}_{i{+}j_{1},\cdots,i{+}j_{n}}{=}\sum_{\alpha,i{=}1}^{L}J_{\alpha}\prod_{k{=}1}^{n} (S_{i{+}j_{k}}^{\mu_k})^{q_{k}},
    \label{eq: generic_Hamiltonian}
\end{equation}
where $S_{i{+}j_{k}}^{\mu_{k}}(\mu_{k}{ \in }\{x,y,z\})$ is the spin-$S$ operator at site $i{+}j_{k}$ and $q_{k}$ is a non-negative integer. The term $\hat{H}^{\alpha}_{i+j_{1},{\cdots}, i{+}j_{n}}$ represents the $\alpha$-type (determined by $\{\mu_{k}\}$ and $\{q_{k}\}$) of interaction between $n{-}$spins at the sites $\{i{+}j_{1},{\cdots}, i{+}j_{n}\}$ ($0{\leq}j_{k}{<}L$), with coupling strength $J_{\alpha}$. The Hamiltonian is defined in the Hilbert space $\mathcal{H}_{L}$ of dimension $\mathcal{D}_{L}{=}(2S{+}1)^{L}$ on the global computational basis $|\vec{S}\rangle{=}{\bigotimes}_{i{=}1}^{L}|s_{i}\rangle$ $(\mathcal{H}_{L}{\equiv}\text{span}\{|\vec{S}\rangle\})$, where $|s_{i}\rangle$ denotes the $(2S{+}1)$ dimensional local Hilbert space on site $i$. In this setting, we establish the following theorem.
\begin{thm}
\label{theorem: 1}
If each $\hat{H}^{\alpha}_{i{+}j_{1},{\cdots}, i{+}j_{n}}$ in the Hamiltonian $H(L)$ anti-commutes with an operator $\mathcal{C}\mathcal{K}$, where $\mathcal{C}{=}\prod_{i{=}1}^{L}\mathcal{C}_{i}$ with $\mathcal{C}_{i}$ on-site invertible operators, and $\mathcal{K}$ is the complex conjugation operator, i.e.,  
\begin{equation}
    \{\hat{H}^{\alpha}_{i+j_{1},\cdots,i+j_{n}},\mathcal{C}\mathcal{K}\}{=}0 \hspace{0.15cm} \text{for}\hspace{0.15cm} i{\in}\{1,{\cdots}, L\},
    \label{eq: anti_commuation}
\end{equation}
then the state [illustrated schematically in Fig.~\ref{fig:1}$(a)$]
\begin{equation}
\label{eq: zero_energystate_1d}
\begin{split}
    |\Lambda\rangle_{\mathcal{C}}{=}{\sum\limits_{|\vec{S}\rangle\in\mathcal{H}_{L}}}\mathcal{C}|\vec{S}\rangle_{1,{\cdots},L}  {\otimes} |\vec{S}\rangle_{\overline{1},{\cdots},\overline{L}}
    {=}{\bigotimes_{i{=}1}^{L}}{\sum_{s{=}1}^{2S{+}1}}\mathcal{C}_{i}|s_{i}{\rangle}|s_{\overline{i}}{\rangle},
\end{split}
\end{equation}
where $\overline{i}{\equiv}i{+}L$, is an exact zero-energy eigenstate of the Hamiltonian $H(2L)$ defined in Eq.~\eqref{eq: generic_Hamiltonian} on a system of size $2L$ with PBC.
\end{thm}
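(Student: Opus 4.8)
\emph{Proof strategy.} The plan is to recognize the state in Eq.~\eqref{eq: zero_energystate_1d} as the on-site dressing, by $\mathcal{C}_A=\prod_{i=1}^{L}\mathcal{C}_i$, of the infinite-temperature thermofield-double (maximally entangled) state $|\Omega\rangle=\sum_{|\vec S\rangle}|\vec S\rangle_{1,\cdots,L}\otimes|\vec S\rangle_{\overline 1,\cdots,\overline L}$ between the two halves $A=\{1,\cdots,L\}$ and $B=\{\overline 1,\cdots,\overline L\}$, with sites paired as $i\leftrightarrow\overline i$, so that $|\Lambda\rangle_{\mathcal C}=\mathcal{C}_A|\Omega\rangle$. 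I would track the action of $H(2L)$ dimer-by-dimer by encoding each dimer $(i,\overline i)$ through the matrix $M_i$ appearing in $\sum_{s,t}(M_i)_{ts}|t\rangle_i|s\rangle_{\overline i}$, so that the base state corresponds to $M_i=\mathcal{C}_i$. The elementary ingredient is the ``ricochet'' identity: a single-site operator $h=(S^\mu)^q$ acting on the $A$-end of a dimer sends $M_i\mapsto hM_i$, while acting on the $B$-end it sends $M_i\mapsto M_ih^{T}$, and since every such $h$ is Hermitian, $h^{T}=h^{*}$.

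\emph{Pairing of terms.} Since $H(2L)$ is translation invariant with PBC, the $2L$ translates of each seed $\hat H^{\alpha}$ split into $L$ pairs $\{T,T^{(L)}\}$ related by the half-shift $i\mapsto i+L$, which swaps $A\leftrightarrow B$. The geometric fact I would exploit is that the offsets satisfy $j_k<L$, so no term can act on both ends of a single dimer; hence $T$ and $T^{(L)}$ touch exactly the same dimers but on opposite ends. I would therefore reduce the theorem to the single claim $T|\Lambda\rangle_{\mathcal C}=-T^{(L)}|\Lambda\rangle_{\mathcal C}$, since summing over the $L$ pairs gives $H(2L)|\Lambda\rangle_{\mathcal C}=\sum_{\alpha}\sum_{i=1}^{L}(T+T^{(L)})|\Lambda\rangle_{\mathcal C}=0$.

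\emph{The crux: localizing the anticommutation.} In the dimer-matrix language $T$ contributes $h\mathcal{C}_i$ (resp.\ $\mathcal{C}_ih^{*}$) on a dimer it hits on the $A$-end (resp.\ $B$-end), and $T^{(L)}$ contributes the opposite. The whole argument rests on converting the global hypothesis Eq.~\eqref{eq: anti_commuation}, which for a given term reads $\hat H^{\alpha}\mathcal{C}=-\mathcal{C}\hat H^{\alpha *}$, into a per-site statement. This is where I expect the real work: because $\hat H^{\alpha}$ is a tensor product of nonzero single-site factors, equality of the two tensor products forces a per-site proportionality $h\mathcal{C}_i=\gamma_i\mathcal{C}_ih^{*}$ with $\prod_i\gamma_i=-1$, and a reality/Hermiticity argument should pin each constant to $\gamma_i=\pm1$, i.e.\ each $\mathcal{C}_i$ conjugates the on-site operator to $\pm$ its complex conjugate. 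Granting $\gamma_i=\pm1$, the contributions of $T$ and $T^{(L)}$ on every shared dimer differ exactly by $\gamma_i$, so $T|\Lambda\rangle_{\mathcal C}=(\prod_i\gamma_i)\,T^{(L)}|\Lambda\rangle_{\mathcal C}=-T^{(L)}|\Lambda\rangle_{\mathcal C}$, as required.

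\emph{Main obstacle.} The delicate point is concentrated entirely in the boundary-crossing terms. For a term contained in a single half the reduction is automatic: the ricochet of the Hermitian factor reproduces its partner together with the global sign, so the intra-half contributions satisfy $(H_A+H_B)|\Lambda\rangle_{\mathcal C}=0$ using only the global relation. For terms straddling $A$ and $B$, however, the overall sign must be distributed across the two ends, and this is exactly where $\gamma_i^{2}=1$ is indispensable. I would therefore isolate and prove the lemma $\mathcal{C}_i h^{*}\mathcal{C}_i^{-1}=\pm h$ on every site as the technical heart; once it is established, the remaining steps are rigid bookkeeping enforced by the property $j_k<L$ that no dimer is ever covered twice.
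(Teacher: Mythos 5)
Your proposal follows essentially the same route as the paper's proof: pairing each term with its half-chain translate, using the ricochet/transpose identity on the maximally entangled dimers, and reducing the boundary-straddling case to the per-site statement $\mathcal{C}_i h^{*}\mathcal{C}_i^{-1}=\pm h$, which is exactly the paper's Lemma~S1. The only incompleteness is that you sketch rather than prove that lemma (the paper establishes it by working in the eigenbasis of $(S^{\mu})^{q}$ and using invertibility of $\mathcal{C}_i$ to exclude every eigenvalue ratio other than $\pm 1$), but you correctly isolate it as the technical heart of the argument.
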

The anti-commutation relation in Eq.~\eqref{eq: anti_commuation} implies
\begin{equation}
    \mathcal{C}(\hat{H}^{\alpha}_{i{+}j_{1}, \cdots, i{+}j_{n}})^{\ast}{=}-\hat{H}^{\alpha}_{i{+}j_{1}, \cdots, i{+}j_{n}}\mathcal{C},
    \label{eq: condition}
\end{equation}
where the complex conjugation is defined with respect to the computational basis $|\vec{S}\rangle$. The proof of the theorem proceeds by using the mapping $\overline{i}{\equiv}i{+}L$, and expressing the Hamiltonian for the $2L$-site system with PBC as $H(2L){=}\sum_{\alpha,i{=}1}^{L} (\hat{H}^{\alpha}_{i{+}j_{1},{\cdots},i{+}j_{n}}{+}\hat{H}^{\alpha}_{\overline{i{+}j_{1}},{\cdots},\overline{i{+}j_{n}}})$, and then showing that for the state $|\Lambda\rangle_{\mathcal{C}}$, the action of the term $\hat{H}^{\alpha}_{i{+}j_{1}, {\cdots}, i{+}j_{n}}$ is canceled by that of $\hat{H}^{\alpha}_{\overline{i{+}j_{1}},{\cdots},\overline{i{+}j_{n}}}$ for all $i$ [see supplemental material (SM)~\cite{SM} for details], i.e.,
\begin{equation}
    \hat{H}^{\alpha}_{i{+}j_{1},\cdots,i{+}j_{n}}|\Lambda\rangle_{\mathcal{C}}{=}{-}\hat{H}^{\alpha}_{\overline{i{+}j_{1}},\cdots,\overline{i{+}j_{n}}}|\Lambda\rangle_{\mathcal{C}}.
    \label{eq: cancellation}
\end{equation}

When each $\hat{H}^{\alpha}_{i{+}j_{1}, {\cdots}, i{+}j_{n}}$ is imaginary, by noting that Eq.~\eqref{eq: condition} always holds for $\mathcal{C} {=} \mathbb{I}$, we obtain the following corollary to Theorem~\ref{theorem: 1}.

\begin{cor}
\label{corollary: 1}
The state 
\begin{equation}
    |\Lambda\rangle_{\mathbb{I}}{=} \sum_{|\vec{S}\rangle} |\vec{S}\rangle  {\otimes} |\vec{S}\rangle{=}\bigotimes_{i{=}1}^{L}\sum_{s_{i}{=}1}^{2S{+}1}|s_{i}{\rangle}|s_{\overline{i}}{\rangle}
    \label{eq: zero_energystate_for_imaginary_Hamiltonian}
\end{equation}
is always a zero-energy eigenstate of any Hamiltonian of the form Eq.~\eqref{eq: generic_Hamiltonian} defined on an even number of lattice sites ($2L$) where each $\hat{H}^{\alpha}_{i{+}j_{1}, {\cdots}, i{+}j_{n}}$ has a purely imaginary matrix representation in the computational basis. 
\end{cor}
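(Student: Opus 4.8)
The plan is to obtain this statement as an immediate specialization of Theorem~\ref{theorem: 1} with the particular choice $\mathcal{C}{=}\mathbb{I}$. First I would observe that the identity operator trivially qualifies as an admissible $\mathcal{C}$ in Theorem~\ref{theorem: 1}: it factorizes as $\mathbb{I}{=}\prod_{i{=}1}^{L}\mathbb{I}_{i}$ with each on-site factor $\mathbb{I}_{i}$ being (trivially) invertible. Hence the machinery of the theorem is applicable the moment the associated anti-commutation relation, in its equivalent form Eq.~\eqref{eq: condition}, is verified for this choice.

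Next I would substitute $\mathcal{C}{=}\mathbb{I}$ into Eq.~\eqref{eq: condition}. The left-hand side collapses to $(\hat{H}^{\alpha}_{i{+}j_{1},{\cdots},i{+}j_{n}})^{\ast}$ and the right-hand side to ${-}\hat{H}^{\alpha}_{i{+}j_{1},{\cdots},i{+}j_{n}}$, so the condition reads $(\hat{H}^{\alpha}_{i{+}j_{1},{\cdots},i{+}j_{n}})^{\ast}{=}{-}\hat{H}^{\alpha}_{i{+}j_{1},{\cdots},i{+}j_{n}}$. The key step is then to recognize that, at the level of matrix elements in the computational basis $\{|\vec{S}\rangle\}$, a complex number $z$ satisfies $z{=}{-}\bar{z}$ precisely when it is purely imaginary. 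Consequently, the substituted condition is exactly equivalent to the hypothesis that each $\hat{H}^{\alpha}_{i{+}j_{1},{\cdots},i{+}j_{n}}$ admits a purely imaginary matrix representation in the computational basis, which is the assumption of the corollary.

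Finally I would note that the state $|\Lambda\rangle_{\mathbb{I}}$ of Eq.~\eqref{eq: zero_energystate_for_imaginary_Hamiltonian} is identically the $\mathcal{C}{=}\mathbb{I}$ instance of the general state $|\Lambda\rangle_{\mathcal{C}}$ in Eq.~\eqref{eq: zero_energystate_1d}. Invoking Theorem~\ref{theorem: 1} with this choice then yields directly that $|\Lambda\rangle_{\mathbb{I}}$ is an exact zero-energy eigenstate of $H(2L)$ on an even number of sites whenever the purely imaginary hypothesis holds. I do not anticipate any genuine obstacle, since the entire argument is a specialization of an already-established result; the only point requiring care is the bookkeeping that confirms the equivalence between the purely imaginary condition and Eq.~\eqref{eq: condition} at $\mathcal{C}{=}\mathbb{I}$, which rests solely on the elementary observation about complex conjugation noted above.
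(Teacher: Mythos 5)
Your proposal is correct and follows exactly the paper's route: the paper obtains Corollary~\ref{corollary: 1} by noting that with $\mathcal{C}{=}\mathbb{I}$ the condition Eq.~\eqref{eq: condition} reduces to $(\hat{H}^{\alpha}_{i{+}j_{1},\cdots,i{+}j_{n}})^{\ast}{=}{-}\hat{H}^{\alpha}_{i{+}j_{1},\cdots,i{+}j_{n}}$, i.e.\ purely imaginary matrix elements, and then invoking Theorem~\ref{theorem: 1}. Your additional bookkeeping (that $\mathbb{I}$ factorizes into invertible on-site operators and that $|\Lambda\rangle_{\mathbb{I}}$ is the $\mathcal{C}{=}\mathbb{I}$ instance of $|\Lambda\rangle_{\mathcal{C}}$) is exactly what the paper leaves implicit.
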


Next, we show that the state $|\Lambda\rangle_{\mathcal{C}}$ is volume-law entangled for an arbitrary contiguous bipartition. By definition, the state $|\Lambda\rangle_{\mathcal{C}}$ is a product of ``dimers" between sites $i$ and $\overline{i}$. The entanglement between the spins on the sites $i$ and $\overline{i}$ is therefore strictly greater than $0$. It can be as high as $\text{log}(2S{+}1)$ depending upon the nature of $\mathcal{C}_{i}$ (the reduced density matrix of each dimer is just $\mathcal{C}_{i} \mathcal{C}_{i}^{\dagger}$, whose eigenvalues are just the singular values of $\mathcal{C}_{i}$), and is maximally entangled if and only if $\mathcal{C}_{i}$ is unitary. 

An arbitrary contiguous subsystem of size $N{\leq}L$ [shaded grey in Fig.~\ref{fig:1}$(a)$] cuts exactly $N$ bonds between the correlated spins leading to the entanglement entropy of the subsystem $\mathcal{S}_{N}{\propto} N$ thereby obeying volume-law scaling. Moreover, when $\mathcal{C}_{i}$ is unitary, the reduced density matrix over any contiguous subsystem is just proportional to the identity matrix, which is just the infinite-temperature Gibbs density matrix, hence the state is thermal for strictly local observables~\cite{ivanov2024volume, chiba2024exact}. However, for tailored subsystems chosen to consist solely of sites $\{i,\overline{i}\}$, the entanglement entropy is strictly zero, and non-local observables with support over these sub-systems exhibit highly athermal properties.     

We now illustrate the power of our general construction by illustrating exact excited zero-energy states for several integrable and non-integrable spin chains. We begin by considering spin-$1/2$ systems for which any Hamiltonian can be recast to the form given in Eq.~\eqref{eq: generic_Hamiltonian}. The spin operators at site $i$ can be represented by Pauli matrices $\sigma_{i}^{\mu} ({\mu}{=}x,y,z)$ and the local Hilbert space is defined by the $\sigma^{z}_{i}$-eigenvectors $|0\rangle_{i}$ and $|1\rangle_{i}$. In what follows, the state $|\Lambda\rangle_{\mathcal{C}}$ can be expressed in terms of Bell states between sites $i,j$ defined as
\begin{equation}
    |\Psi^{\pm}\rangle_{i, j}{=}\frac{(|00\rangle {\pm} |11\rangle)_{i,j}}{\sqrt{2}};~
|\Phi^{\pm}\rangle_{i,j} {=} \frac{(|01\rangle {\pm} |10\rangle)_{i,j}}{\sqrt{2}}.
\label{eq: bell states}
\end{equation}
For instance, consider the following non-integrable Hamiltonian discussed in Ref.~\cite{chiba2024exact}
\begin{equation}
\begin{split}
    H_{1}(L){=}\sum_{i{=}1}^{L}\big(J_{1}\sigma_{i}^{x}\sigma_{i{+}1}^{y}+J_{2}\sigma_{i}^{y}\sigma_{i+1}^{z}\big),
\end{split}
    \label{eq: spin-1/2 example2}
\end{equation}
with arbitrary values of the coupling constants $J_{1}$ and $J_{2}$. Here $n{=}2,j_{k}{=}k{-}1$ and each $\hat{H}^{\alpha}_{i, i{+}1}$ (i.e., $\sigma_{i}^{x}\sigma_{i{+}1}^{y}$ and $\sigma_{i}^{y}\sigma_{i+1}^{z}$) is purely imaginary in the computational basis. From Corollary~\ref{corollary: 1}, the state $|\Lambda\rangle_{\mathbb{I}}$ given in Eq.~\eqref{eq: zero_energystate_for_imaginary_Hamiltonian} is a zero-energy eigenstate of $H_{1}(2L)$ and takes a simple form $|\Lambda(S{=}1/2)\rangle_{\mathbb{I}}{=}\bigotimes_{i{=}1}^{L}|\Psi^{+}\rangle_{i,\overline{i}}$. This state is depicted in Fig.~\ref{fig:1}$(a)$ where all the dashed lines represent the Bell state $|\Psi^{+}\rangle$. Since the Hamiltonian $H_{1}(2L)$ has global spectral-reflection symmetry generated by the operator $\mathcal{M}_{1}{=}{\prod}_{i{=}1}^{2L}\sigma_{i}^{y}$, the zero-energy state mentioned above is a mid-spectrum excited state. Note that $|\Lambda(S{=}1/2)\rangle_{\mathbb{I}}$ is a zero-energy eigenstate of \textit{any} spin-$1/2$ Hamiltonian with even number of sites of form Eq.~\eqref{eq: generic_Hamiltonian} that has an \textit{odd} number of $\sigma_{y}$'s in each $\hat{H}^{\alpha}_{i{+}j_{1}, {\cdots}, i{+}j_{n}}$. 

Furthermore, when the half-chain length $L$ is a multiple of three, the operator $\mathcal{C}_{1}{=}{\prod}_{p{=}1}^{L/3}\sigma_{3p{-}2}^{x} \sigma_{3p{-}1}^{y}\sigma_{3p}^{z}$ satisfies the conditions given in Eq.~\eqref{eq: condition}. This allows us to identify another zero-energy eigenstate (and its translations) of $H_{1}(2L)$ of the form $|\Lambda\rangle_{\mathcal{C}_{1}}{=}\bigotimes_{i{=}1}^{L{-}2}|\Phi^{+}\rangle_{i,\overline{i}}|\Phi^{-}\rangle_{i{+}1,\overline{i{+}1}}|\Psi^{-}\rangle_{i{+}2,\overline{i{+}2}}$. The state $|\Lambda\rangle_{\mathcal{C}_{1}}$ is precisely the exact zero-energy eigenstate dubbed as Entangled Antipodal Pair (EAP) state presented in Ref.~\cite{chiba2024exact}. While Ref.~\cite{chiba2024exact} derives the Hamiltonian $H_{1}(2L)$ via an exhaustive search of all possible nearest-neighbor spin-$1/2$ interactions for which the predefined EAP state has zero-energy, we show that $H_1(2L)$ satisfies a very general sufficient condition for the existence of such an EAP state. Similarly, other zero energy eigenstates found in different spin-$1/2$ Hamiltonians of Ref~\cite{chiba2024exact, udupa2023weak} including the PXP chain~\cite{ivanov2024volume} can be understood using our construction (see SM~\cite{SM} for details with other examples).

Next, we investigate a spin-$S$ $XY$ chain, described by the Hamiltonian
\begin{equation}
    H_{XY}(L){=} \sum_{i=1}^{L} (J_{1}S_{i}^{x}S_{i+1}^{x}+J_{2}S_{i}^{y}S_{i+1}^{y}+hS_{i}^{z}),
    \label{eq: spin-S XY}
\end{equation}
where $J_{1},J_{2}$ and $h$ are arbitrary and $S_{i}^{\mu}$($\mu{=}x,y,z$) are spin${-}S$ operators at site $i$. The spin components define site parity operators $\mathcal{P}_{i}^{\mu}{=}e^{i\pi S_{i}^{\mu}}$. One can show that $\mathcal{P}_{i}^{\mu}$ commutes with $S_{i}^{\mu}$ but anti-commutes with $S_{i}^{\mu'} (\mu{\neq}\mu')$. Hence when $L$ is even, in the orthonormal basis $|m\rangle_{i}$ that are eigenstates of $S_{i}^{z}$ with eigenvalues $m{=}{-}S, {-}S{+}1, {\cdots}, S$, the operator defined by $\mathcal{C}_{XY}{=}\prod_{p{=}1}^{L/2}\mathcal{P}_{2p{-}1}^{y}\mathcal{P}_{2p}^{x}$, satisfies Eq.~\eqref{eq: condition}. As a result of Theorem~\ref{theorem: 1}, the Hamiltonian $H_{XY}(2L)$ hosts an exact zero-energy eigenstate $|\Lambda\rangle_{\mathcal{C}_{XY}}$ (and its translations). Using the action of the on-site parity operators on the basis states
$\mathcal{P}_{i}^{x}|m\rangle_{i}{=}({-}1)^{S}|{-}m\rangle_{i}$ and $\mathcal{P}_{i}^{y}|m\rangle_{i}{=}({-}1)^{S{+}m}|{-}m\rangle_{i}$, we can express the state as
\begin{equation}
\begin{split}
&|\Lambda\rangle_{\mathcal{C}_{XY}}=\bigotimes_{i{=}1}^{L{-}1}|\kappa_{S}^{-}\rangle_{i,\overline{i}}|\kappa_{S}^{+}\rangle_{i{+}1,\overline{i{+}1}},\hspace{0.4cm}\text{where}\\
  &|\kappa_{S}^{\pm}\rangle_{i,j}{=}\frac{1}{\sqrt{2S{+}1}}\sum_{p{=}0}^{2S}({\pm}1)^{p}(S^{-})^{p}|S\rangle_{i}{\otimes}(S^{+})^{p}|{-}S\rangle_{j}.
\end{split}
\label{eq: Lambda_S_XY state}
\end{equation}
For $S{>}1/2$, $H_{XY}$ is non-integrable~\cite{schecter2019weak} for generic values of the couplings, while for $S{=}1/2$, $H_{XY}$ is integrable (reduces to the transverse field Ising model for $J_{2}{=}0$)~\cite{Franchini_XY_17}. For $S{=}1/2$, this state reads $|\Lambda\rangle_{\mathcal{C}_{XY}}{=}\bigotimes_{i{=}1}^{L{-}1}|\Phi^{-}\rangle_{i,\overline{i}}|\Phi^{+}\rangle_{i{+}1,\overline{i{+}1}}$ and shows that volume-law eigenstates exist even in integrable models. 

The sufficient conditions we derive allow us to construct a large family of Hamiltonians that will inherently host the specific volume-law entangled eigenstate of the particular form of Eq.~\eqref{eq: zero_energystate_1d}. For instance, the zero-energy state $|\Lambda\rangle_{\mathcal{C}_{XY}}$ described in Eq.~\eqref{eq: Lambda_S_XY state} is also a zero-energy eigenstate of the spin-$S$ Kitaev chain~\cite{Kitaev2006anyons, sen2010spin, mohapatra2023pronounced} given by the Hamiltonian $H_{K}(2L){=}\sum_{i{=}1}^{L}(J_{1}S_{2i{-}1}^xS_{2i}^x {+}J_{2} S_{2i}^y S_{2i{+}1}^y)$ for arbitrary values of $J_{1}$ and $J_{2}$. Furthermore, Hamiltonians involving longer-range (beyond nearest neighbors) interactions between spins, e.g. $H_{\rm NNNN}(2L){=}\sum_{i{=}1}^{2L} (J_{1}S_{i}^{x}S_{i{+}3}^{x}{+}J_{2}S_{i}^{y}S_{i{+}3}^{y})$ or models with more than two-spin interactions, e.g. $H_{3{-}4}(2L){=}\sum_{i{=}1}^{2L} (J_{1}S_{i}^{z}S_{i{+}1}^{z}S_{i{+}2}^{z}{+}J_{2}S_{i}^{x}S_{i{+}1}^{z}S_{i{+}2}^{y}S_{i{+}3}^{z})$
host the same zero-energy state $|\Lambda\rangle_{\mathcal{C}_{XY}}$ in their spectrum as they satisfy Eq.~\eqref{eq: condition}. Many such Hamiltonians exist physically or can be engineered experimentally.

\textbf{\textit{Generalization to arbitrary dimensions.}}
To investigate higher-dimensional or amorphous systems, it is convenient to revisit the one-dimensional case and rearrange the $2L$ sites labeled by indices ${1}, \overline{1}, {\cdots},{L}, \overline{L}$ into the configuration shown in Fig.~\ref{fig:1}$(b)$. Our goal is to determine local Hamiltonians that host the state $|\Lambda\rangle_\mathcal{C}$ defined in Eq.~\eqref{eq: zero_energystate_1d} as a zero energy eigenstate provided the operator $\mathcal{C}$ satisfies conditions analogous to those in Eq.~\eqref{eq: condition}. 

Similar to the one-dimensional case, for all the interaction terms $\hat{H}_{j_{1},{\cdots}, j_{n}}^{\alpha}$ between any $n{-}$sites $j_{1},{\cdots}, j_{n}$ (where $j_k{\in}\{1,\overline{1},{\cdots}, L,\overline{L}\}$ and $j_{k}{\neq} \overline{j}_{k'}$ for all $k,k'$), restricted to the form $\hat{H}_{j_{1},{\cdots}, j_{n}}^{\alpha}{=}\prod_{k{=}1}^{n}(S_{j_{k}}^{\mu_{k}})^{q_{k}}$ and satisfying $ \{\hat{H}^{\alpha}_{j_{1},{\cdots},j_{n}},\mathcal{C}\mathcal{K}\}{=}0$ (with $\mathcal{C}_{i}{=}\mathcal{C}_{\overline{i}}$), we find that
\begin{equation}
    (\hat{H}^{\alpha}_{j_{1},\cdots,j_{n}} + \hat{H}^{\alpha}_{\overline{j}_{1},\cdots,\overline{j}_{n}})|\Lambda\rangle_{\mathcal{C}}=0,
    \label{eq: cancelation_in_arb_dimension}
\end{equation}
where we have considered the identification $\overline{\overline{j}}_{k}{=}j_{k}$. Consequently, all Hamiltonians of the form $H{=}\sum_{\alpha,\{j\}} J_{\alpha}(\hat{H}^{\alpha}_{j_{1},{\cdots},j_{n}} {+} \hat{H}^{\alpha}_{\overline{j}_{1},{\cdots},\overline{j}_{n}})$ with arbitrary coefficients $J_{\alpha}$, host $|\Lambda\rangle_{\mathcal{C}}$ as a zero energy eigenstate.

For instance, consider the Hamiltonian of the nearest-neighbor spin-$S$ $XY$ model with a transverse field on a $M{\times}L$ rectangular lattice with PBC in both directions,
\begin{equation}
\begin{split}
    H_{XY}^{2D}(M{\times}L){=}\sum_{\eta,i,j}J_{\eta}(S_{r_{i,j}}^{\eta}S_{r_{i{+}1,j}}^{\eta}{+}S_{r_{i,j}}^{\eta}S_{r_{i,j{+}1}}^{\eta}){+}h\sum_{i,j}S_{r_{i,j}}^{z},    
\end{split}
\label{eq: 2D XY model Hamiltonian}
\end{equation}
where $S_{r_{i,j}}^{\eta}(\eta{\in}\{x,y\})$ and $S_{r_{i,j}}^{z}$ are the spin operators at site $(i,j)$ indexed by $r_{i,j}{=}i{+}M{\times}(j{-}1)$ with $i{\in}\{1, {\cdots},M\}$  and $j{\in}\{1, {\cdots},L\}$ labeling the horizontal and vertical directions, respectively, with $(1,1)$ at the bottom-left corner, and $J_{\eta}$ and $h$ are arbitrary couplings.

\begin{figure}
    \centering
    \includegraphics[scale=0.35]{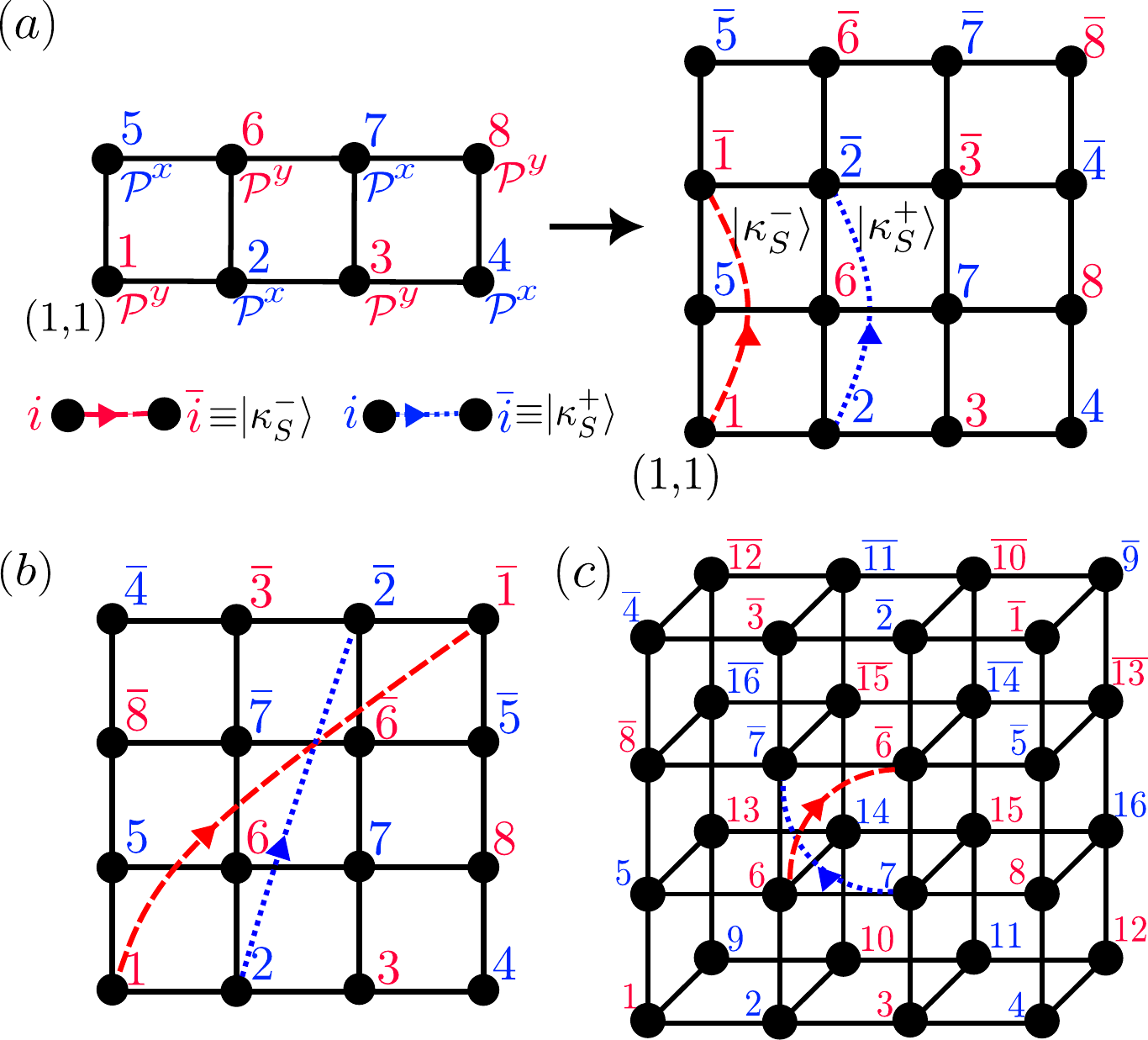}
    \caption{Schematic representation of the zero energy eigenstate $|\Lambda\rangle_\mathcal{C}$ for spin-$S$ $XY$ model with transverse field in different geometries. The black dots denote the sites and the solid lines between them indicate the $ XY$ interaction. The spins at red (blue) sites $r$ are entangled only with their partners $\overline{r}$ in the state $|\kappa_{S}^{-}\rangle$ ($|\kappa_{S}^{+}\rangle$) [defined in Eq.~\eqref{eq: Lambda_S_XY state}]. $(a)$ The $\mathcal{C}_{XY}^{2D}$ that satisfies Eq.~\eqref{eq: condition} for a $M{\times}L{=}4{\times}2$ rectangular lattice with PBC (left part), which gives a zero energy state for $M{\times}2L$ lattice with PBC (right part). $(b)$ Another distinct zero energy state on a $M{\times}2L$ lattice with PBC with a different pairing pattern. $(c)$ Eigenstate on a $M{\times}2L{\times}N$ cuboid, with either PBC or open boundaries, with $(M,L,N){=}(4,2,2).$}
    \label{fig:2}
\end{figure}

Now using the mapping $\overline{r_{i,j}}{=}r_{i,j{+}L}$, the Hamiltonian for a $M{\times}2L$ lattice with PBC can be expressed as
\begin{equation}
\begin{split}
    H&_{XY}^{2D}(M{\times}2L){=}\sum_{\eta,i,j}\Big[J_{\eta}(S_{r_{i,j}}^{\eta}S_{r_{i{+}1,j}}^{\eta}{+}S_{\overline{r_{i,j}}}^{\eta}S_{\overline{r_{i{+}1,j}}}^{\eta})\\
    &{+}J_{\eta}(S_{r_{i,j}}^{\eta}S_{r_{i,j{+}1}}^{\eta}{+}S_{\overline{r_{i,j}}}^{\eta}S_{\overline{r_{i,j{+}1}}}^{\eta}){+}h(S_{r_{i,j}}^{z}{+}S_{\overline{r_{i,j}}}^{z})\Big].    
\end{split}
\label{eq: 2D XY model Hamiltonian_relabelled}
\end{equation}
In the computational basis $|m\rangle_{r_{i,j}}$ that are eigenstates of $S_{r_{i,j}}^{z}$, each $\hat{H}_{\{r\}}^{\alpha}$ of the Hamiltonian, i.e., $S_{r_{i,j}}^{x}S_{r_{i{+}1,j}}^{x},S_{r_{i,j}}^{y}S_{r_{i{+}1,j}}^{y}, S_{r_{i,j}}^{x}S_{r_{i,j{+}1}}^{x}$, $S_{r_{i,j}}^{y}S_{r_{i,j{+}1}}^{y}$ and $S_{r_{i,j}}^{z}$ are real and for even values of $M$ and $L$, anticommute with the operator [shown in the left part of Fig.~\ref{fig:2}$(a)$]
\begin{equation}
    \mathcal{C}_{XY}^{2D}{=}\prod_{i,j}(\mathcal{P}_{r_{i,j}}^{y})^{\delta_{(i{+}j){\rm~mod~}2,0}}(\mathcal{P}_{r_{i,j}}^{x})^{\delta_{(i{+}j){\rm~mod~}2,1}}
    \label{eq: operator_2DXY}
\end{equation}
[hence satisfy Eq.~\eqref{eq: condition}].
Thus, using a generalization of Theorem~\ref{theorem: 1} (see SM~\cite{SM}) and Eq.~\eqref{eq: cancelation_in_arb_dimension}, the Hamiltonian $H_{XY}^{2D}(M{\times}2L)$ hosts a zero energy eigenstate of the form $|\Lambda\rangle_{C_{XY}}^{2D}$ which can be expressed as 
\begin{equation}
    |\Lambda\rangle_{\mathcal{C}_{XY}}^{2D}{=}\bigotimes_{i,j}(|\kappa_{S}^{-}\rangle_{r,\overline{r}})^{\delta_{(i{+}j){\rm~mod~}2,0}}
    (|\kappa_{S}^{+}\rangle_{r,\overline{r}})^{\delta_{(i{+}j){\rm~mod~}2,1}}, 
    \label{eq: zeroenergy_eigstate_2DXY}
\end{equation}
where $|\kappa_S^{\pm}\rangle$ are defined in Eq.~\eqref{eq: Lambda_S_XY state}. The state $|\Lambda\rangle_{\mathcal{C}_{XY}}^{2D}$ is shown in the right part of Fig.~\ref{fig:2}$(a)$, where the spins at sites $r_{i,j}$ and $\overline{r_{i,j}}$ are in the state $|\kappa_{S}^{-}\rangle$ when $i{+}j$ is even (marked in red) and in the state $|\kappa_{S}^{+}\rangle$ when $i{+}j$ is odd (marked in blue). 
Similarly, with some other choices of labeling of the sites of the $M{\times}2L$ lattice (e.g., see Fig.~\ref{fig:2}$(b)$ where $\overline{r_{i,j}}{=}r_{M{+}1{-}i,2L{+}1{-}j}$), the Hamiltonian can be cast to the form given in Eq.~\eqref{eq: 2D XY model Hamiltonian_relabelled} while the operator $\mathcal{C}_{XY}^{2D}$ given in Eq.~\eqref{eq: operator_2DXY} satisfy the required condition Eq.~\eqref{eq: condition}, thus we can construct many more zero-energy eigenstates of the kind $|\Lambda\rangle_{\mathcal{C}_{XY}}^{2D}$ for $H_{XY}^{2D}$ with different pairing patterns. Notably, the number of such zero energy eigenstates scales with the volume of the system $\mathcal{O}(ML)$ (see SM~\cite{SM} for details). The same strategy can be used to find zero-energy eigenstates of spin-$S$ $XY$ model in different geometries (e.g., see Fig.~\ref{fig:2}$(c)$ and more in the SM~\cite{SM}). Using our construction, many such zero-energy eigenstates can be found for other spin Hamiltonians in arbitrary dimensions. Unlike the one-dimensional case, in higher dimensions, there can be special contiguous bipartitions for which the entanglement is not a volume-law, although it is for most of them~\cite{SM}.

\textbf{\textit{Discussion.}} 
In this work, we have introduced a general method for constructing a specific class of exact zero-energy excited eigenstates in spin-$S$ models that satisfy certain conditions. This construction only uses the existence of an operator, which, along with complex conjugation, anti-commutes with each term of the Hamiltonian.
The existence of such an operator is easy to guess and check, and this directly allows us to construct exact eigenstates without analyzing the complex microscopic details of the model. We demonstrated the broad applicability of our construction by finding new exact eigenstates for various natural examples of spin-$S$ Hamiltonians. We also recovered some previously known results for spin-$1/2$ models~\cite{ivanov2024volume, chiba2024exact} as special cases of our construction. 

In one dimension, the states we identified exhibit volume-law scaling of entanglement for general contiguous bipartitions. When the operator $\mathcal{C}$ is unitary, they exhibit infinite-temperature thermal properties for all strictly local observables. This phenomenology mostly extends to higher dimensions, although a few contiguous bipartitions might not display a volume-law scaling of entanglement (see SM~\cite{SM} for details). However, in any dimension, these states are athermal for non-local but few-body observables and hence can be considered examples of highly entangled QMBS. Indeed, by construction, they are simultaneous eigenstates of non-commuting local operators, consistent with the proposed definition of QMBS based on commutant algebras~\cite{moudgalya2022exhaustive}. It would be interesting to see whether other examples of QMBS exhibiting superficially similar structures, like the exponentially many exact eigenstates in the 2D PXP model~\cite{lin2020quantum}, fit into this construction. 

Building on this foundation, it is natural to wonder if extensions of these ideas to models beyond spins such as fermionic and bosonic systems would work, or generalizations such as going beyond pairwise entanglement to entangled clusters of $l$-spins, with $l{\geq}3$ can be engineered. Along with the theoretical interest in states close to completely satisfying ETH, the existence of such states can also aid in the experimental measurement of Out-of-Time-Ordered Correlators (OTOC), as discussed in \cite{ivanov2024volume}.

Finally, on a different note, while we focused on constructing non-integrable models with these states, there are also integrable models, such as the one-dimensional transverse-field Ising model, which host such states. While these states are athermal and violate the ETH~\cite{srednicki1994chaos} in non-integrable models (at least with respect to some non-local but few-body observables), it is worth exploring the implications of their existence in integrable models to the generalized ETH~\cite{cassidy2011generalized,vidmar2016generalized, Ilievski2015complete, Mestyan_2014_Generalized_ETH, Pozsgay_2017_Generalized_ETH}, the natural analog of ETH for integrable models.

\textbf{\textit{Acknowledgments}.} We thank Andrew Ivanov, Bal\'azs Pozsgay, Maurizio Fagotti, and Diptiman Sen for useful discussions. Computational portions of this work were carried out on the Nandadevi supercomputer maintained and supported by the Institute of Mathematical Sciences' High-Performance Computing Center, India. This research was supported by the Munich Center for Quantum Science and Technology (MCQST), and in part by the International Centre for Theoretical Sciences (ICTS) for participating in the program - Stability of Quantum Matter in and out of Equilibrium at Various Scales (code: ICTS/SQMVS2024/01). S. Moudgalya acknowledges the hospitality of the Institute of Mathematical Sciences (IMSc), Chennai, where this collaboration was initiated. S. Mohapatra is grateful to the School of Natural Sciences, Technical University of Munich (TUM) for its hospitality in the summer of 2024 where a part of this work was completed.

\bibliography{references}

\newpage 
\cleardoublepage

\onecolumngrid
\begin{center}
\textbf{\large Supplemental Material for ``Exact volume-law entangled zero-energy eigenstates in a large class of spin models"}\\[5pt]

\begin{center}
 {\small Sashikanta Mohapatra$^{1,2}$, Sanjay Moudgalya$^{3,4}$ and Ajit C. Balram$^{1,2}$}  
\end{center}

\begin{center}
{\sl \footnotesize
$^{1}$Institute of Mathematical Sciences, CIT Campus, Chennai 600113, India

$^{2}$Homi Bhabha National Institute, Training School Complex, Anushaktinagar, Mumbai 400094, India
    
$^{3}$School of Natural Sciences, Technische Universit\"{a}t M\"{u}nchen (TUM), James-Franck-Str. 1, 85748 Garching, Germany

$^{4}$Munich Center for Quantum Science and Technology (MCQST), Schellingstr. 4, 80799 M\"{u}nchen, Germany
}
\end{center}

\begin{quote}
{\small This supplemental material contains some technical details and additional examples. In Sec.~\ref{sec: Proof of Theorem 1}, we prove Theorem~\ref{theorem: 1} of the main text. In Sec.~\ref{sec: generalization_and_new_examples}, we present the generalization of Theorem~\ref{theorem: 1} to higher dimensions and applications to the nearest neighbor spin-$S$ $XY$ model on a hypercube that we considered in the main text. Finally, in Sec.~\ref{sec: previous_examples}, we reproduce some zero energy eigenstates of previously studied Hamiltonians~\cite{udupa2023weak,ivanov2024volume,chiba2024exact} using our framework. }
\end{quote}
\end{center}

\vspace*{0.4cm}

\setcounter{equation}{0}
\setcounter{theorem}{0}
\setcounter{figure}{0}
\setcounter{table}{0}
\setcounter{page}{1}
\setcounter{section}{0}
\makeatletter
\renewcommand{\theequation}{S\arabic{equation}}
\renewcommand{\thetheorem}{S\arabic{theorem}}
\renewcommand{\thefigure}{S\arabic{figure}}
\renewcommand{\thesection}{S\Roman{section}}
\renewcommand{\thepage}{\arabic{page}}
\renewcommand{\thetable}{S\arabic{table}}

\section{Proof of Theorem 1 of main text: $|\Lambda\rangle_\mathcal{C}$ is a zero energy eigenstate of $H(2L)$}
\label{sec: Proof of Theorem 1}

For convenience, let us restate the Hamiltonian and Theorem 1 from the main text here. We are considering the class of spin-$S$ chains of size $L$ with a translation invariant interaction [with periodic boundary condition (PBC)] of the form
\begin{equation}
    H(L){=}\sum_{\alpha,i{=}1}^{L} \hat{H}^{\alpha}_{i{+}j_{1},\cdots,i{+}j_{n}}{=}\sum_{\alpha,i{=}1}^{L}J_{\alpha}\prod_{k{=}1}^{n} (S_{i{+}j_{k}}^{\mu_k})^{q_{k}},
    \label{SMeq: generic_Hamiltonian}
\end{equation}
where $S_{i{+}j_{k}}^{\mu_{k}}(\mu_{k}{ \in }\{x,y,z\})$ is the spin-$S$ operator at site $i{+}j_{k}$ and $q_{k}$ is a non-negative integer. The term $\hat{H}^{\alpha}_{i+j_{1},{\cdots}, i{+}j_{n}}$ represents the $\alpha$-type (determined by $\{\mu_{k}\}$ and $\{q_{k}\}$) of interaction between $n{-}$spins at the sites $\{i{+}j_{1},{\cdots}, i{+}j_{n}\}$ ($0{\leq}j_{k}{<}L$), with coupling strength $J_{\alpha}$. The Hamiltonian is defined in the Hilbert space $\mathcal{H}_{L}$ of dimension $\mathcal{D}_{L}{=}(2S{+}1)^{L}$ on the global computational basis $|\vec{S}\rangle{=}{\bigotimes}_{i{=}1}^{L}|s_{i}\rangle$ $(\mathcal{H}_{L}{\equiv}\text{span}\{|\vec{S}\rangle\})$, where $|s_{i}\rangle$ denotes the $(2S{+}1)$ dimensional local Hilbert space on site $i$. For this setting, we stated the following theorem in the main text:

\begin{thm}
\label{SMtheorem: 1}
If each $\hat{H}^{\alpha}_{i{+}j_{1},{\cdots}, i{+}j_{n}}$ in the Hamiltonian $H(L)$ anti-commutes with an operator $\mathcal{C}\mathcal{K}$, where $\mathcal{C}{=}\prod_{i{=}1}^{L}\mathcal{C}_{i}$ with $\mathcal{C}_{i}$ on-site invertible operators, and $\mathcal{K}$ is the complex conjugation operator, i.e.,  
\begin{equation}
    \{\hat{H}^{\alpha}_{i+j_{1},\cdots,i+j_{n}},\mathcal{C}\mathcal{K}\}{=}0 \hspace{0.15cm} \text{for}\hspace{0.15cm} i{\in}\{1,{\cdots}, L\},
    \label{SMeq: anti_commuation}
\end{equation}
then the state
\begin{equation}
\label{SMeq: zero_energystate_1d}
\begin{split}
    |\Lambda\rangle_{\mathcal{C}}{=}\sum_{|\vec{S}\rangle\in\mathcal{H}_{L}} \mathcal{C}|\vec{S}\rangle_{1,\cdots,L}  \otimes |\vec{S}\rangle_{\overline{1},\cdots,\overline{L}}
    {=}\bigotimes_{i{=}1}^{L}\sum_{s{=}1}^{2S{+}1}\mathcal{C}_{i}|s_{i}{\rangle}|s_{\overline{i}}{\rangle},
\end{split}
\end{equation}
where $\overline{i}{\equiv} i{+}L$, is an exact zero energy eigenstate of the Hamiltonian $H(2L)$ defined in Eq.~\eqref{SMeq: generic_Hamiltonian} on a system of size $2L$ with PBC.
\end{thm}

\begin{proof}
Throughout this proof, we use the identification $\overline{i}{\equiv}i{+}L$. We can then express the Hamiltonian for the $2L$-site system with PBC as
\begin{equation}
   H(2L){=}\sum_{\alpha,i{=}1}^{L} (\hat{H}^{\alpha}_{i{+}j_{1},\cdots,i{+}j_{n}}{+}\hat{H}^{\alpha}_{\overline{i{+}j_{1}},\cdots,\overline{i{+}j_{n}}}) 
\end{equation}
and then the proof proceeds by showing that the action of the term $\hat{H}^{\alpha}_{i{+}j_{1}, \cdots, i{+}j_{n}}$ on the state $|\Lambda\rangle_{\mathcal{C}}$ is exactly canceled by that of $\hat{H}^{\alpha}_{\overline{i{+}j_{1}},\cdots,\overline{i{+}j_{n}}}$ on the same state for all $i$.

This is straightforward to show when each $\hat{H}^{\alpha}_{i{+}j_{1}, {\cdots}, i{+}j_{n}}$ acts entirely on either the first $L$ sites ($1 {\leq} i{+}j_{1}{\leq}L$ and $1 {\leq} i{+}j_{n}{\leq}L$), or the last $L$ sites ($\overline{1}{\leq}i{+}j_{1}{\leq}\overline{L}$ and $\overline{1}{\leq}i{+}j_{n}{\leq}\overline{L}$). By the anti-commutation relation in Eq.~\eqref{SMeq: anti_commuation}, we simply mean
\begin{equation}
    \mathcal{C}(\hat{H}^{\alpha}_{i{+}j_{1}, \cdots, i{+}j_{n}})^{\ast}{=}-\hat{H}^{\alpha}_{i{+}j_{1}, \cdots, i{+}j_{n}}\mathcal{C},
    \label{SMeq: condition}
\end{equation}
where the complex conjugation is defined with respect to the computational basis $|\vec{S}\rangle$. Hence, we obtain
\begin{subequations} 
\begin{align} 
&\hat{H}^{\alpha}_{i{+}j_{1},\cdots,i{+}j_{n}}|\Lambda\rangle_{\mathcal{C}}{=}{-}\sum_{|\vec{S}\rangle} \mathcal{C}(\hat{H}^{\alpha}_{i{+}j_{1},\cdots,i{+}j_{n}})^{\ast} |\vec{S}\rangle{\otimes} |\vec{S}\rangle
\label{SMeq: action_of_left_part_on_Lambda}\\
&\hat{H}^{\alpha}_{\overline{i{+}j_{1}},\cdots,\overline{i{+}j_{n}}}|\Lambda\rangle_{\mathcal{C}}{=}{\sum_{|\vec{S}\rangle}}\mathcal{C}|\vec{S}\rangle{\otimes}  \hat{H}^{\alpha}_{\overline{i{+}j_{1}},\cdots,\overline{i{+}j_{n}}}|\vec{S}\rangle.
\label{SMeq: action_of_right_part_on_Lambda}
\end{align} 
\end{subequations}
Next, in Eqs.~\eqref{SMeq: action_of_left_part_on_Lambda} and \eqref{SMeq: action_of_right_part_on_Lambda} we insert the resolution of the identity operator $\mathbb{I}_{1,{\cdots}, L}{=}(\sum_{\vec{S'}}|\vec{S'}\rangle\langle \vec{S'}|)_{1,\cdots, L}$ and $\mathbb{I}_{\overline{1},{\cdots}, \overline{L}}{=}(\sum_{\vec{S'}}|\vec{S'}\rangle\langle \vec{S'}|)_{\overline{1},{\cdots}, \overline{L}}$ respectively to get

\begin{subequations} 
\begin{align} 
&\hat{H}^{\alpha}_{i{+}j_{1},\cdots,i{+}j_{n}}|\Lambda\rangle_{\mathcal{C}}{=}{-}\sum_{|\vec{S}\rangle,|\vec{S'}\rangle} f^{\ast}(S',S)\  \mathcal{C}|\vec{S'}\rangle\otimes|\vec{S}\rangle,
\label{SMeq: action_of_left_part_on_Lambda_2}\\
&\hat{H}^{\alpha}_{\overline{i{+}j_{1}},\cdots,\overline{i{+}j_{n}}}|\Lambda\rangle_{\mathcal{C}}{=}{\sum_{|\vec{S}\rangle,|\vec{S'}\rangle}}f(S',S)\   \mathcal{C}|\vec{S}\rangle\otimes |\vec{S'}\rangle,
\label{SMeq: action_of_right_part_on_Lambda_2}
\end{align}
\end{subequations}
where the function $f(S',S){=}\langle \vec{S'}| (\hat{H}^{\alpha}_{i{+}j_{1},\cdots,i{+}j_{n}}) |\vec{S}\rangle$ with $f^{\ast}(S',S){=}\langle \vec{S'}| (\hat{H}^{\alpha}_{i{+}j_{1},\cdots,i{+}j_{n}})^{\ast} |\vec{S}\rangle$. Using the hermiticity of $\hat{H}^{\alpha}_{i{+}j_{1},\cdots,i{+}j_{n}} $ in this basis, i.e., $f^{\ast}(S',S){=}f(S,S')$, we interchange $|\vec{S}\rangle$ and $|\vec{S'}\rangle$ in Eq.~\eqref{SMeq: action_of_left_part_on_Lambda_2} to get 
\begin{equation}   \hat{H}^{\alpha}_{i{+}j_{1},\cdots,i{+}j_{n}}|\Lambda\rangle_{\mathcal{C}}{=}{-}\hat{H}^{\alpha}_{\overline{i{+}j_{1}},\cdots,\overline{i{+}j_{n}}}|\Lambda\rangle_{\mathcal{C}}\hspace{0.2cm}\text{for}\hspace{0.2cm} i{+}j_{n}{\leq}L.
\end{equation}
For terms that straddle the first $L$ sites and last $L$ sites, i.e., when $i{+}j_{1}{\leq}L$ and $i{+}j_{n}{>}L$ (or  $i{+}j_{1}{\leq}\overline{L}$ and $i{+}j_{n}{<}L$), given the form of the Hamiltonian in Eq.~\eqref{SMeq: generic_Hamiltonian}, we can decompose the terms $\hat{H}^{\alpha}_{i{+}j_{1},\cdots,i{+}j_{n}}$ and $\hat{H}^{\alpha}_{\overline{i{+}j_{1}},\cdots,\overline{i{+}j_{n}}}$ as
\begin{equation}
    \hat{H}^{\alpha}_{i{+}j_{1},\cdots,i{+}j_{n}}{=}\mathcal{A}^{\alpha}_{i{+}j_{1}, \cdots, L}{\otimes}\mathcal{B}^{\alpha}_{\overline{1}, \cdots, \overline{i{+}j_{n}{-}L}}~~\text{and}~~\hat{H}^{\alpha}_{\overline{i{+}j_{1}},\cdots,\overline{i{+}j_{n}}}{=}\mathcal{B}^{\alpha}_{1, \cdots, i{+}j_{n}{-}L}{\otimes}\mathcal{A}^{\alpha}_{\overline{i{+}j_{1}}, \cdots, \overline{L}},
    \label{SMeq: partition of H into A and B}
\end{equation}
where $\mathcal{A}^{\alpha}_{i{+}j_{1}, \cdots, L}{=}\prod_{k{=}1}^{i{+}j_{k}{=}L} (S_{i{+}j_{k}}^{\mu_{k}})^{q_{k}}$ and $\mathcal{B}^{\alpha}_{1, \cdots, i{+}j_{n}{-}L}{=}\prod_{i{+}j_{k}{=}1}^{i{+}j_{n}{-}L} (S_{i{+}j_{k}}^{\mu_{k}})^{q_{k}}$ acts on the first $L$ sites while $\mathcal{A}^{\alpha}_{\overline{i{+}j_{1}}, \cdots, \overline{L}}{=}$ $\prod_{k{=}1}^{i{+}j_{k}{=}L} (S_{\overline{i{+}j_{k}}}^{\mu_{k}})^{q_{k}}$ and $\mathcal{B}^{\alpha}_{\overline{1}, \cdots, \overline{i{+}j_{n}{-}L}}{=}\prod_{i{+}j_{k}{=}1}^{i{+}j_{n}{-}L} (S_{\overline{i{+}j_{k}}}^{\mu_{k}})^{q_{k}}$ acts on the last $L$ sites. Since both $\hat{H}^{\alpha}_{i{+}j_{1},\cdots, i{+}j_{n}}$ and the operator $\mC$ are just products of on-site operators, the condition stated in Eq.~\eqref{SMeq: condition} implies that for each of the spins we should have 
\begin{equation}
     \mC_{i{+}j_{k}}[(S_{i{+}j_{k}}^{\mu_{k}})^{q_{k}}]^{\ast}=K_{i{+}j_{k}}(S_{i{+}j_{k}}^{\mu_{k}})^{q_{k}}\mC_{i{+}j_{k}}~~\text{with}~~\prod_{k{=}1}^{n}K_{i{+}j_{k}}{=}{-}1,
     \label{SMeq: on_site_condition}
\end{equation}
where $K_{i{+}j_{k}}$ is a non-zero complex number. Thus for terms of the form given in Eq.~\eqref{SMeq: partition of H into A and B}, we should have
\begin{equation}
    \mathcal{C}(\mathcal{A}^{\alpha}_{i{+}j_{1}, \cdots, L})^{\ast}{=}K_{\rm{left}}\mathcal{A}^{\alpha}_{i{+}j_{1}, {\cdots}, L}\mathcal{C}~~\text{and}~~
    \mathcal{C}(\mathcal{B}^{\alpha}_{\overline{1}, \cdots, \overline{i{+}j_{n}{-}L}})^{\ast}{=}K_{\rm{right}}\mathcal{B}^{\alpha}_{\overline{1}, \cdots, \overline{i{+}j_{n}{-}L}}\mathcal{C}~~\text{with}~~K_{\rm{left}}{\times}K_{\rm{right}}{=}{-}1,
    \label{SMeq: splitting_condition}
\end{equation}
where $K_{\rm{left}}{=}\prod_{l{=}i{+}j_{1}}^{L}K_{l}$ and $K_{\rm{right}}{=}\prod_{l{=}L{+}1}^{i{+}j_{n}}K_{l}$. As we show in Lemma~\ref{Lemma1} below, the only non-zero value of $K_{l}$ can be ${\pm}1$. From Eq.~\eqref{SMeq: splitting_condition} we then get either $K_{\rm{left}}{=}1$ and $K_{\rm{right}}{=}{-}1$ or $K_{\rm{left}}{=}{-}1$ and $K_{\rm{right}}{=}1$, i.e.,
\begin{equation}
    \mathcal{C}(\mathcal{A}^{\alpha}_{i{+}j_{1}, \cdots, L})^{\ast}{=}{\pm}\mathcal{A}^{\alpha}_{i{+}j_{1}, {\cdots}, L}\mathcal{C},\;\;
    \mathcal{C}(\mathcal{B}^{\alpha}_{\overline{1}, \cdots, \overline{i{+}j_{n}{-}L}})^{\ast}{=}{\mp}\mathcal{B}^{\alpha}_{\overline{1}, \cdots, \overline{i{+}j_{n}{-}L}}\mathcal{C}.
\label{SMeq: spilting_condition_@}
\end{equation}
Using the decompositions shown in Eq.~(\ref{SMeq: splitting_condition}), and then using Eq.~(\ref{SMeq: spilting_condition_@}), we obtain
\begin{subequations}
\begin{align}
    &\hat{H}^{\alpha}_{i{+}j_{1},\cdots,i{+}j_{n}}|\Lambda\rangle_{\mathcal{C}}{=}\sum_{|\vec{S}\rangle}\mathcal{A}^{\alpha}_{i{+}j_{1}, \cdots, L}\mathcal{C}|\vec{S}\rangle
    {\otimes}   \mathcal{B}^{\alpha}_{\overline{1}, \cdots, \overline{i{+}j_{n}{-}L}}|\vec{S}\rangle{=}{\pm}\sum_{|\vec{S}\rangle}\mathcal{C}(\mathcal{A}^{\alpha}_{i{+}j_{1}, \cdots, L})^{\ast}|\vec{S}\rangle
    {\otimes}   \mathcal{B}^{\alpha}_{\overline{1}, \cdots, \overline{i{+}j_{n}{-}L}}|\vec{S}\rangle
    \label{SMeq: action_L_part_coupling_Lambda}\\
    &\hat{H}^{\alpha}_{\overline{i{+}j_{1}},\cdots,\overline{i{+}j_{n}}}|\Lambda\rangle_{\mathcal{C}}{=}\sum_{|\vec{S}\rangle}\mathcal{B}^{\alpha}_{1, \cdots, i{+}j_{n}{-}L}\mathcal{C}|\vec{S}\rangle{\otimes} \mathcal{A}^{\alpha}_{\overline{i{+}j_{1}}, \cdots, \overline{L}} |\vec{S}\rangle{=}{\mp}\sum_{|\vec{S}\rangle}\mathcal{C}(\mathcal{B}^{\alpha}_{1, \cdots, i{+}j_{n}{-}L})^{\ast}|\vec{S}\rangle{\otimes} \mathcal{A}^{\alpha}_{\overline{i{+}j_{1}}, \cdots, \overline{L}} |\vec{S}\rangle.
    \label{SMeq: action_R_part_coupling_Lambda}
\end{align}
\end{subequations}
Inserting in Eqs.~\eqref{SMeq: action_L_part_coupling_Lambda} and \eqref{SMeq: action_R_part_coupling_Lambda}, the resolution of identity $\mathbb{I}_{1,{\cdots}, L}{\otimes}\mathbb{I}_{\overline{1},{\cdots}, \overline{L}}{\equiv}\sum_{|\vec{F}\rangle,|\vec{G}\rangle \in\mathcal{H}_{L}}(|\vec{F}\rangle\langle \vec{F}|)_{1,{\cdots}, L}{\otimes} (|\vec{G}\rangle\langle \vec{G}|)_{\overline{1},{\cdots}, \overline{L}}$, we get
\begin{equation}
    \begin{split}
        \Big( H^{\alpha}_{i{+}j_{1},\cdots,i{+}j_{n}}{+}\hat{H}^{\alpha}_{\overline{i{+}j_{1}},\cdots,\overline{i{+}j_{n}}}\Big) |\Lambda\rangle_\mathcal{C}= {\pm}&
        \sum_{|\vec{S}\rangle,|\vec{F}\rangle,|\vec{G}\rangle} \Bigg[ \langle (\vec{F}|(\mathcal{A}^{\alpha}_{i{+}j_{1}, \cdots, L})^{\ast}|\vec{S}\rangle)_{1,{\cdots}, L} (\langle \vec{G}|\mathcal{B}^{\alpha}_{\overline{1}, \cdots, \overline{i{+}j_{n}{-}L}}|\vec{S}\rangle)_{\overline{1},{\cdots}, \overline{L}} -\\ &(\langle \vec{F}|(\mathcal{B}^{\alpha}_{1, \cdots, i{+}j_{n}{-}L})^{\ast}|\vec{S}\rangle)_{1,{\cdots}, L} (\langle \vec{G}|\mathcal{A}^{\alpha}_{\overline{i{+}j_{1}}, \cdots, \overline{L}}|\vec{S}\rangle)_{\overline{1},{\cdots}, \overline{L}}\Bigg]\mathcal{C}|\vec{F}\rangle\otimes|\vec{G}\rangle.
    \end{split}
\end{equation}
Using $(\langle \vec{G}|\mathcal{A}^{\alpha}_{\overline{i{+}j_{1}}, \cdots, \overline{L}}|\vec{S}\rangle)_{\overline{1},\cdots,\overline{L}}{=}(\langle \vec{G}|\mathcal{A}^{\alpha}_{{i{+}j_{1}}, \cdots, {L}}|\vec{S}\rangle)_{1,\cdots,L}$ (similarly for $\mathcal{B}^{\alpha})$ we obtain
\begin{equation}
    \begin{split}
        \Big( H^{\alpha}_{i{+}j_{1},\cdots,i{+}j_{n}}{+}\hat{H}^{\alpha}_{\overline{i{+}j_{1}},\cdots,\overline{i{+}j_{n}}}\Big) |\Lambda\rangle_\mathcal{C}= {\pm}
        \sum_{|\vec{S}\rangle,|\vec{F}\rangle,|\vec{G}\rangle} \Bigg[ &\langle \vec{F}|(\mathcal{A}^{\alpha}_{i{+}j_{1}, \cdots, L})^{\ast}|\vec{S}\rangle\langle \vec{G}|\mathcal{B}^{\alpha}_{{1}, \cdots, {i{+}j_{n}{-}L}}|\vec{S}\rangle -\\ &\langle \vec{F}|(\mathcal{B}^{\alpha}_{1, \cdots, i{+}j_{n}{-}L})^{\ast}|\vec{S}\rangle \langle \vec{G}|\mathcal{A}^{\alpha}_{{i{+}j_{1}}, \cdots, {L}}|\vec{S}\rangle\Bigg]\mathcal{C}|\vec{F}\rangle\otimes|\vec{G}\rangle.
    \end{split}
    \label{SMeq: Proof1_part1}
\end{equation}
Now the hermiticity of $\mathcal{A}^{\alpha}$ and 
$\mathcal{B}^{\alpha}$ implies $\langle \vec{F}|(\mathcal{A}^{\alpha}_{i{+}j_{1}, \cdots, L})^{\ast}|\vec{S}\rangle{=}\langle \vec{S}|\mathcal{A}^{\alpha}_{i{+}j_{1}, \cdots, L}|\vec{F}\rangle$ and $\langle \vec{F}|(\mathcal{B}^{\alpha}_{1, \cdots, i{+}j_{n}{-}L})^{\ast}|\vec{S}\rangle{=}$ $\langle \vec{S}|\mathcal{B}^{\alpha}_{1, \cdots, i{+}j_{n}{-}L}|\vec{F}\rangle$ respectively, leading us to
\begin{equation}
    \begin{split}
        \Big(H^{\alpha}_{i{+}j_{1},\cdots,i{+}j_{n}}{+}\hat{H}^{\alpha}_{\overline{i{+}j_{1}},\cdots,\overline{i{+}j_{n}}}\Big) |\Lambda\rangle_{\mC} = {\pm}
        \sum_{|\vec{F}\rangle,|\vec{S}\rangle,|\vec{G}\rangle} \Bigg[& \langle \vec{G}|\mathcal{B}^{\alpha}_{{1}, \cdots, {i{+}j_{n}{-}L}}|\vec{S}\rangle \langle \vec{S}|\mathcal{A}^{\alpha}_{i{+}j_{1}, \cdots, L}|\vec{F}\rangle-\\ & \langle \vec{G}|\mathcal{A}^{\alpha}_{{i{+}j_{1}}, \cdots, {L}}|\vec{S}\rangle\langle \vec{S}|\mathcal{B}^{\alpha}_{1, \cdots, i{+}j_{n}{-}L}|\vec{F}\rangle\Bigg]\mathcal{C}|\vec{F}\rangle\otimes|\vec{G}\rangle.
    \end{split}
    \label{SMeq: Proof1_part2}
\end{equation}
Since $\sum_{|\vec{S}\rangle \in \mathcal{H}_{L}}|\vec{S}\rangle\langle \vec{S}|{=}\mathbb{I}$, Eq.~\eqref{SMeq: Proof1_part2} implies $ \Big(\hat{H}^{\alpha}_{i+j_{1},\cdots,i+j_{n}}{+}\hat{H}^{\alpha}_{\overline{i{+}j_{1}},\cdots,\overline{i{+}j_{n}}}\Big) |\Lambda\rangle_{\mathcal{C}}{=}0$, completing the proof.
\end{proof}

\begin{lemma}
    For a spin-$S$ system, if $\mC_{l}$ is an invertible operator at site $l$ and $K_{l}$ is a non-zero complex number satisfying the condition
    \begin{equation}
        \mC_{l}[(S_{l}^{\mu})^{q}]^{\ast}=K_{l}(S_{l}^{\mu})^{q}\mC_{l},
        \label{SMeq: condition_K}
    \end{equation}
then $K_{l}$ can take only values $\pm 1$.
\label{Lemma1}
\end{lemma}

\begin{proof}
To simplify the analysis, we express each operator in Eq.~\eqref{SMeq: condition_K} as $(2S{+}1){\times}(2S{+}1)$ matrix in the basis where $S_{i}^{\mu}$ is diagonal. In this diagonal basis of $S_{i}^{\mu}$, comparing row-$m$ and column-$n$ element of both sides of Eq.~\eqref{SMeq: condition_K} gives the equation
\begin{equation}
    (\mathfrak{s}_{n})^{q}(\mathcal{C}_{l})_{mn}{=}K_{l}(\mathfrak{s}_{m})^{q}(\mathcal{C}_{l})_{mn} \hspace{0.5cm}\text{for all $m,n\in\{0,\cdots,2S\}
    $},
    \label{SMeq: comparing_each_entry}
\end{equation}
where $\mathfrak{s}_{m}{=}{-}S{+}m$ is the eigenvalue of $S_{i}^{\mu}$. Now to determine the possible values of $K_{l}$, we begin by dividing the potential values of $K_{l}$ (which are all non-zero complex numbers) into two distinct sets and analyzing the solutions separately: 
\begin{enumerate}
    \item[I.] The set containing all the non-zero finite $q$th power of the ratio of eigenvalues of $S_{i}^{\mu}$ i.e.
    \begin{equation}
        K_{l}=\Big(\frac{\mathfrak{s}_{a}}{\mathfrak{s}_{b}}\Big)^q\hspace{0.2cm}\text{where}\hspace{0.2cm}a,b\in\{0,\cdots,S{-}1,S{+}1,\cdots,2S\}.
    \label{SMeq: all_Kvalues_without_C}
    \end{equation}
    (Note that neither $a$ nor $b$ can equal $S$ since we are only considering non-zero finite $K_{l}$ in this set.)
    
    \item[II.] The set of all other non-zero complex numbers. 
    
\end{enumerate}
If $K_{l}$ is not of the form given in Eq.~\eqref{SMeq: all_Kvalues_without_C}, for Eq.~\eqref{SMeq: comparing_each_entry} to hold for all values $m$ and $n$, we must have $(\mC_{l})_{mn}{\times}(\mathfrak{s}_{n})^{q}{=}0$ [implies only non-zero element in $\mathcal{C}_{l}$ can be $(\mathcal{C}_{l})_{m,S}$ since $\mathfrak{s}_{S}{=}0$] and $(\mC_{l})_{mn}{\times}(\mathfrak{s}_{m})^{q}{=}0$ [implies only non-zero element in $\mathcal{C}_{l}$ can be $(\mathcal{C}_{l})_{S,n}$ since $\mathfrak{s}_{S}{=}0$] for all $m,n$. This implies that the matrix elements of $\mC_{l}$ are of the form $(\mC_{l})_{mn}{=}C_{S}\delta_{m, S}\delta_{n, S}$ where $C_{S}$ is an arbitrary complex number [i.e., the only non-zero element in $\mathcal{C}_{l}$ can be ($\mathcal{C}_{l})_{S, S}$], and hence $\det[\mC_{l}]{=}0$ for any value of $C_{S}$. This means that $\mC_l$ must be non-invertible, violating the assumptions of the Lemma~\ref{Lemma1}. Thus the values of $K_{l}$ in set (II) are not valid solutions. 

Now from the set of values of $K_{l}$ given in Eq.~\eqref{SMeq: all_Kvalues_without_C}, we want to find the values of $K_{l}$ consistent with $\mC_l$ being invertible. The eigenvalues $\mathfrak{s}_{a}$ and $\mathfrak{s}_{b}$ satisfy one of the following relations
\begin{enumerate}
    \item[(i)] $\mathfrak{s}_{a}{=}\mathfrak{s}_{b}$ (when $a{=}b$). This would mean that $K_l{=}{+}1$.
    \item[(ii)] $\mathfrak{s}_{a}{=}{-}\mathfrak{s}_{b}$ (when $a{=}2S{-}b$). This would mean that $K_l{=}{-}1$ when $q$ is odd and $K_l = +1$ when $q$ is even.
    \item[(iii)] $|\mathfrak{s}_{a}|{\neq}|\mathfrak{s}_{b}|$ (otherwise). This would mean that $K_l{\neq}{\pm}1$.
\end{enumerate}
Therefore, we can divide the range of values of $K_{l}$ in Eq.~\eqref{SMeq: all_Kvalues_without_C} into three categories. We then solve for $\mC_{l}$ using Eq.~\eqref{SMeq: comparing_each_entry} for each $K_{l}$ in these categories and check if the invertibility condition is met. The detailed procedure is outlined below.
\begin{itemize}
    \item \textbf{Category (i):} 
    Using $K_{l}{=}1$ in Eq.~\eqref{SMeq: comparing_each_entry} we get,
\begin{equation}
    (\mathfrak{s}_{n})^{q}(\mathcal{C}_{l})_{mn}{=}(\mathfrak{s}_{m})^{q}(\mathcal{C}_{l})_{mn} \hspace{0.5cm}\text{for all $m,n\in\{0,\cdots,2S\}
    $}.
    \label{SMeq: comparing_each_entry_for_K=1}
\end{equation}
Now solving Eq.~\eqref{SMeq: comparing_each_entry_for_K=1} for all $m,n$, we find  the matrix elements of $\mC_{l}$ can be given by $(\mC_{l})_{mn}{=}C_{m}\delta_{mn}$, where $\{C_{m}\}$ are arbitrary complex numbers. For any non-zero set of $\{C_{m}\}$, the invertibility condition $\text{det}[\mC_{l}]{\neq}0$ can be satisfied, hence $K_{l}{=}1$ is a potentially valid solution for $K_{l}$. 
\item \textbf{Category (ii):}
Substituting $K_{l}{=}{\pm}1$ into Eq.~\eqref{SMeq: comparing_each_entry} for all values of $m,n$, we find the matrix elements $(\mC_{l})_{mn}{=}C_{n}\delta_{2S{-}m,n}$, where $\{C_{n}\}$ are arbitrary complex numbers. Similarly, any non-zero set of $\{C_{n}\}$ also satisfies $\text{det}[\mC_{l}]{\neq}0$ consistent with $\mathcal{C}$ being invertible, so $K_{l}{=}{\pm}1$ is also a potentially valid solution for $K_{l}$. 
    \item \textbf{Category (iii)}:
From Eq.~\eqref{SMeq: all_Kvalues_without_C} let us consider an arbitrary value of $K_{l}$ given by $K_{l}{=} (\mathfrak{s}_{a_{1}})^{q}/(\mathfrak{s}_{b_{1}})^{q}$ in this category (i.e., with $a_{1}{\neq}b_{1}$ and $a_{1}{\neq}2S{-}b_{1}$). Substituting this value of $K_{l}$ back into Eq.~\eqref{SMeq: comparing_each_entry}, we get the matrix element of $\mC_{l}$:
\begin{equation}
    (\mC_{l})_{mn}{=}C_{mn}(\delta_{m,a_{1}}\delta_{n,b_{1}}{+}\delta_{m,2S{-}a_{1}}\delta_{n,2S{-}b_{1}}),
    \label{SMeq: non-inverible C}
\end{equation}
where $\{C_{mn}\}$ are arbitrary complex numbers. Any set of values of $\{C_{mn}\}$ leads to $\text{det}[\mC_{l}]{=}0$ making it an invalid solution for $K_{l}$ under the assumptions of the Lemma~\ref{Lemma1}. 
\end{itemize}
Thus, with the invertibility constraint on $\mC_{l}$, the only valid non-zero values for $K_{l}$ are ${\pm}1$, completing the proof of the Lemma~\ref{Lemma1}.
\end{proof}

\section{Generalization of Theorem~\ref{SMtheorem: 1} to higher dimensions and applications to the nearest neighbor spin-$S$ $XY$ model on a hypercube}
\label{sec: generalization_and_new_examples}
In this section, we extend Theorem~\ref{SMtheorem: 1} to two-dimensional rectangular systems. Then, applying the generalization of Theorem~\ref{SMtheorem: 1}, we provide explicit examples of zero energy eigenstates for the two-dimensional spin-$S$ $XY$ model with a transverse field that we considered in the main text. We then show some examples to suggest how our results can be generalized to (hyper)cuboidal lattices in arbitrary dimensions. 

We consider the class of spin-$S$ systems on rectangular grids of size $M{\times}L$ with a translation invariant interaction [with PBC in both directions] of the form
\begin{equation}
    H^{2D}(M{\times}L){=}\sum_{\alpha,\{r_{i,j}\}} \hat{H}^{\alpha}_{\{r_{i,j}\},n}
    {\equiv}\sum_{\substack{\alpha,\\1{\leq}i{\leq}M,\\1{\leq}j{\leq}L}} \hat{H}^{\alpha}_{(r_{i,j})_{1},\cdots,(r_{i,j})_{n}}{=}\sum_{\substack{\alpha,\\1{\leq}i{\leq}M,\\1{\leq}j{\leq}L}}\prod_{k{=}1}^{n} (S_{(r_{i,j})_{k}}^{\mu_k})^{q_{k}},
    \label{SMeq: generic_Hamiltonian_2D}
\end{equation}
where $S_{(r_{i,j})_{k}}^{\mu_k}(\mu_k{\in} \{x,y,z\})$ are the spin operators at site $(i,j)$ indexed by $r_{i,j}{=}i{+}M{\times}(j{-}1)$ with $i{\in}\{1, {\cdots}, M\}$  and $j{\in}\{1, {\cdots}, L\}$ labeling the horizontal and vertical directions, respectively, with $(1,1)$ at the bottom-left corner and each $q_k$ is a non-negative integer (For ease of notation, we have used $\mu_{k}{\equiv}\mu_{(r_{i,j})_{k}}$ and $q_{k}{\equiv}q_{(r_{i,j})_{k}}$.). The notation $(r_{i,j})_n$ is used to denote the $n^{\rm th}$ site in the vicinity of the site $r_{i,j}$ with $(r_{i,j})_1{\equiv}r_{i,j}$. The term $\hat{H}^{\alpha}_{(r_{i,j})_{1},\cdots,(r_{i,j})_{n}}$ represents an $\alpha$-type (determined by the specific values of $\{\mu_{k}\}$ and $\{q_{k}\}$) interaction term between $n{-}$spins at the sites $\{~(r_{i,j})_{1},{\cdots},(r_{i,j})_{n}~\}$. 
\subsection{General Theorem}
For the higher-dimensional setting described above, the result generalizing Theorem~\ref{SMtheorem: 1} is as follows:
\begin{thm}
\label{generalization_Theorem_1_to_2D}
If each $\hat{H}_{\{r\}}^{\alpha}$ term in the Hamiltonian $H^{2D}(M{\times}L)$ anti-commutes with an operator $\mathcal{C}\mathcal{K}$ [analogous to the condition stated in Eq.~\eqref{SMeq: anti_commuation}], where $\mathcal{C}{=}\prod_{i,j}\mathcal{C}_{r_{i,j}}$ with $\mathcal{C}_{r_{i,j}}$ on-site invertible operators, and $\mathcal{K}$ is the complex conjugation operator, then the state
\begin{equation}
\label{SMeq: zero_energystate_2d}
\begin{split}
    |\Lambda\rangle_{\mathcal{C}}{=}{\sum\limits_{|\vec{S}\rangle\in\mathcal{H}_{M{\times}L}}}\mathcal{C}|\vec{S}\rangle_{\{r_{i,j}\}}  {\otimes} |\vec{S}\rangle_{\{\overline{r_{i,j}}\}}
    {=}{\bigotimes_{i,j}^{M,L}}~{\sum_{s{=}1}^{2S{+}1}}\mathcal{C}_{r_{i,j}}|s_{r_{i,j}}{\rangle}|s_{\overline{r_{i,j}}}{\rangle},
\end{split}
\end{equation}
where $\overline{r_{i,j}}{\equiv}r_{i,j{+}L}$ [$\overline{r_{i,j}}{\equiv}r_{i{+}L,j}$], is an exact zero-energy eigenstate of the Hamiltonian $H(M{\times}2L)$ [resp. $H(2M{\times}L)$] on a system of size $M{\times}2L$ [resp. $2M{\times}L$] with PBC in both directions.
\end{thm}
The proof of Theorem~\ref{generalization_Theorem_1_to_2D} proceeds along the same lines as that of Theorem~\ref{SMtheorem: 1}.
Using the mapping $\overline{r_{i,j}}{\equiv}r_{i,j{+}L}$, the Hamiltonian for the $M{\times}2L$ system with PBC can be expressed as $H(M{\times}2L){=}\sum_{\alpha,i,j}^{M,L} (\hat{H}^{\alpha}_{\{r_{i,j}\}, n}{+}\hat{H}^{\alpha}_{\{\overline{r_{i,j}}\}, n})$, and then for the state $|\Lambda\rangle_{\mathcal{C}}$, the action of the term $\hat{H}^{\alpha}_{\{r_{i,j}\}, n}$ is canceled by that of $\hat{H}^{\alpha}_{\{\overline{r_{i,j}}\}, n}$ for all $i,j$.
Similar ideas can be used to establish the analogous result for the $2M{\times}L$ system.

\begin{figure}
    \centering
    \includegraphics[scale=0.4]{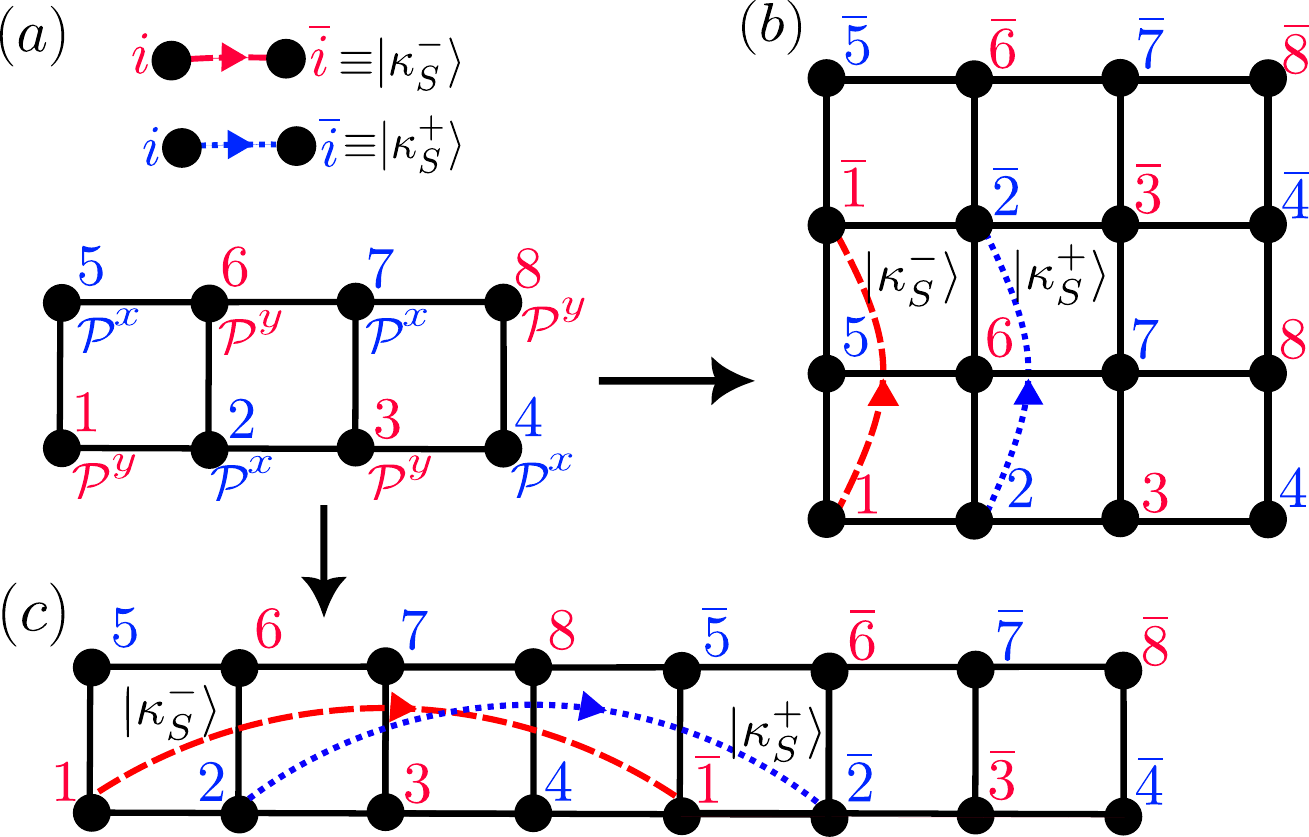}
    \caption{Schematic representation of the zero energy eigenstate $|\Lambda\rangle_\mathcal{C}$ [defined in Eq.~\eqref{SMeq: zero_energystate_2d}] for nearest neighbor spin-$S$ $XY$ model with a transverse field in two dimensions obtained using Theorem~\ref{generalization_Theorem_1_to_2D}. The black dots denote the sites and the solid lines between them indicate the $XY$ interaction. The spins at red (blue) sites $r$ are entangled only with their partners $\overline{r}$ in the state $|\kappa_{S}^{-}\rangle$ ($|\kappa_{S}^{+}\rangle$) [defined in Eq.~\eqref{SMeq: kappa}]. $(a)$ shows $\mathcal{C}_{XY}^{2D}$ satisfies Eq.~\eqref{SMeq: condition} for a $M{\times}L{=}4{\times}2$ rectangular lattice with PBC, which gives a zero energy state for $(b)$ $M{\times}2L$ lattice and $(c)$ $2M{\times}L$ lattice with PBC.}
    \label{fig: supp_fig1}
\end{figure}

\begin{figure}
    \centering
    \includegraphics[scale=0.4]{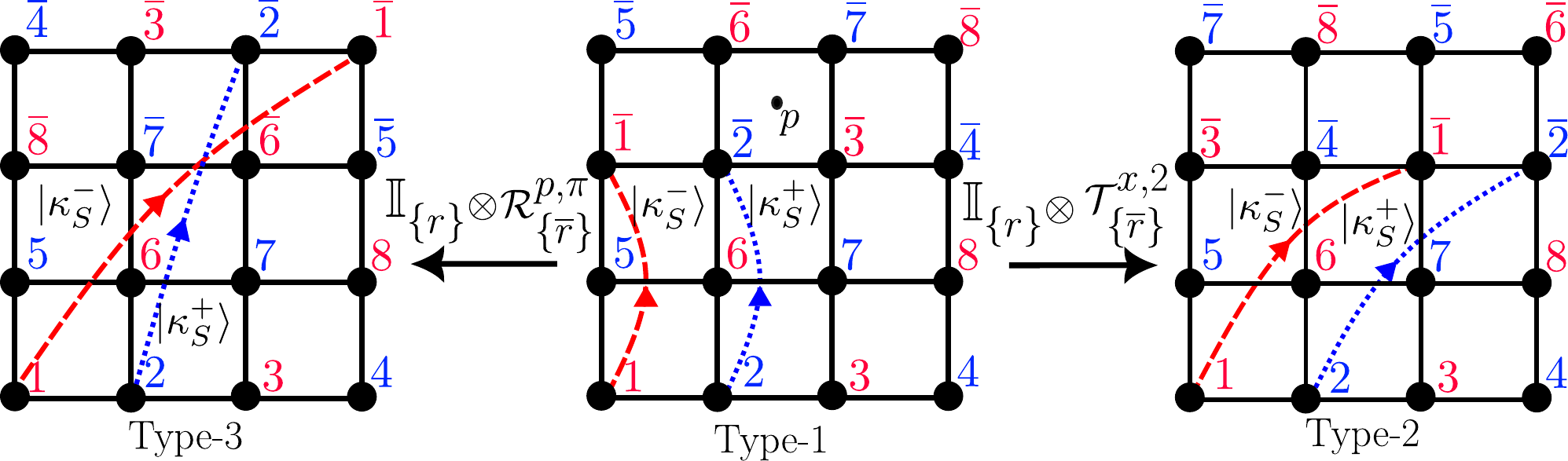}
    \caption{Schematic representation of the different types of zero energy eigenstate of the form $|\Lambda\rangle_\mathcal{C}$ [defined in Eq.~\eqref{SMeq: zero_energystate_2d}] of nearest neighbor spin-$S$ $XY$ model with transverse field defined on a $M{\times}2L{=}4{\times}4$ square lattice with PBC. The black dots denote the sites and the solid lines between them indicate the $XY$ interaction. The spins at red (blue) sites $r$ are entangled only with their partners $\overline{r}$ in the state $|\kappa_{S}^{-}\rangle$ ($|\kappa_{S}^{+}\rangle$) [defined in Eq.~\eqref{SMeq: kappa}]. We get the type-1 state as a direct consequence of Theorem~\ref{generalization_Theorem_1_to_2D}, while the type-2 and type-3 states are constructed by acting $\mathbb{I}_{\{r\}}{\otimes}\mathcal{T}^{x,M/2}_{\{\overline{r}\}}$ and $\mathbb{I}_{\{r\}}{\otimes}\mathcal{R}^{p,\pi}_{\{\overline{r}\}}$ on the type-1 state. Here the operator $\mathcal{T}^{x,M/2}_{\{\overline{r}\}}$ translates all the $\{\overline{r}\}$ sites by $M/2$ units in the horizontal direction and the operator $\mathcal{R}^{p,\pi}_{\{\overline{r}\}}$ rotates all the $\{\overline{r}\}$ sites around the point $p{=}[(M{+}1)/2,(3L{+}1)/2]$ by an angle $\pi$.}
    \label{fig: supp_fig2}
\end{figure}

\subsection{Concrete Example}
Next, we provide concrete examples of states obtained using Theorem~\ref{generalization_Theorem_1_to_2D} in the two-dimensional spin-$S$ $XY$ model. Consider the spin-$S$ $XY$ model with a transverse field on an $(M{\times}L)$ rectangular lattice with PBC in both directions described by the Hamiltonian
\begin{equation}
\begin{split}
    H_{XY}^{2D}(M{\times}L){=}\sum_{\eta,i,j}J_{\eta}(S_{r_{i,j}}^{\eta}S_{r_{i{+}1,j}}^{\eta}{+}S_{r_{i,j}}^{\eta}S_{r_{i,j{+}1}}^{\eta}){+}h\sum_{i,j}S_{r_{i,j}}^{z},    
\end{split}
\label{SMeq: 2D XY model Hamiltonian}
\end{equation}
where $\eta{\in}\{x,y\}$ represents the $XY$-interaction and $J_{\mu}$ and $h$ are arbitrary couplings. 
Using the mapping $\overline{r_{i,j}}{=}r_{i,j{+}L}$, the $XY$ model Hamiltonian for a $M{\times}2L$ lattice with PBC can be expressed as
\begin{equation}
\begin{split}
    H&_{XY}^{2D}(M{\times}2L){=}\sum_{\eta,i,j}\Big[J_{\eta}(S_{r_{i,j}}^{\eta}S_{r_{i{+}1,j}}^{\eta}{+}S_{\overline{r_{i,j}}}^{\eta}S_{\overline{r_{i{+}1,j}}}^{\eta})\\
    &{+}J_{\eta}(S_{r_{i,j}}^{\eta}S_{r_{i,j{+}1}}^{\eta}{+}S_{\overline{r_{i,j}}}^{\eta}S_{\overline{r_{i,j{+}1}}}^{\eta}){+}h(S_{r_{i,j}}^{z}{+}S_{\overline{r_{i,j}}}^{z})\Big].    
\end{split}
\label{SMeq: 2D XY model Hamiltonian_relabelled}
\end{equation}
In the computational basis $|m\rangle_{r_{i,j}}$ that are eigenstates of $S_{r_{i,j}}^{z}$, each $\hat{H}_{\{r\}}^{\alpha}$ of the Hamiltonian, i.e., $S_{r_{i,j}}^{x}S_{r_{i{+}1,j}}^{x},S_{r_{i,j}}^{y}S_{r_{i{+}1,j}}^{y}, $\\$S_{r_{i,j}}^{x}S_{r_{i,j{+}1}}^{x}$, $S_{r_{i,j}}^{y}S_{r_{i,j{+}1}}^{y}$ and $S_{r_{i,j}}^{z}$ is real.
For even values of $M$ and $L$, they also anticommute with the operator [shown in Fig.~\ref{fig: supp_fig1}$(a)$]
\begin{equation}
    \mathcal{C}_{XY}^{2D}{=}\prod_{i,j}(\mathcal{P}_{r_{i,j}}^{y})^{\delta_{(i{+}j){\rm~mod~}2,0}}(\mathcal{P}_{r_{i,j}}^{x})^{\delta_{(i{+}j){\rm~mod~}2,1}}.
    \label{SMeq: operator_2DXY}
\end{equation}
Thus, using Theorem~\ref{generalization_Theorem_1_to_2D}, the Hamiltonian $H_{XY}^{2D}(M{\times}2L)$ hosts a zero energy eigenstate of the form $|\Lambda\rangle_{C}$ which can be expressed as 
\begin{eqnarray}
    |\Lambda_{1}\rangle_{\mathcal{C}_{XY}}^{2D}{=}\bigotimes_{i,j}(|\kappa_{S}^{-}\rangle_{r,\overline{r}})^{\delta_{(i{+}j){\rm~mod~}2,0}}
    (|\kappa_{S}^{+}\rangle_{r,\overline{r}})^{\delta_{(i{+}j){\rm~mod~}2,1}},
    \label{SMeq: zeroenergy_eigstate_2DXY}\\
    |\kappa_{S}^{\pm}\rangle_{i,j}{=}\frac{1}{\sqrt{2S{+}1}}\sum_{p{=}0}^{2S}({\pm}1)^{p}(S^{-})^{p}|S\rangle_{i}{\otimes}(S^{+})^{p}|{-}S\rangle_{j}.
   \label{SMeq: kappa}
\end{eqnarray}
The state $|\Lambda\rangle_{\mathcal{C}_{XY}}^{2D}$ is shown in Fig.~\ref{fig: supp_fig1}$(b)$, where the spins at sites $r_{i,j}$ and $\overline{r_{i,j}}$ are in the state $|\kappa_{S}^{-}\rangle$ when $i{+}j$ is even (marked by red dashes) and in the state $|\kappa_{S}^{+}\rangle$ when $i{+}j$ is odd (marked by blue dots). Similarly, Fig.~\ref{fig: supp_fig1}$(c)$ shows a zero energy eigenstate for $H_{XY}^{2D}(2M{\times}L)$. 
\subsection{Different classes of states}
Furthermore, using lattice symmetry operations, there are additional classes of zero-energy eigenstates for the two-dimensional spin-$S$ $XY$ model that can be constructed.
We now present some examples for the $M{\times}2L$ rectangular lattice.
Consider the state 
\begin{equation}
  |\Lambda_{2}\rangle_{\mathcal{C}_{XY}}^{2D}{=}\mathbb{I}_{\{r\}}{\otimes}\mathcal{T}^{x,M/2}_{\{\overline{r}\}}|\Lambda_{1}\rangle_{\mathcal{C}_{XY}}^{2D}{=}\bigotimes_{i,j}(|\kappa_{S}^{-}\rangle_{r,\overline{r}})^{\delta_{(i{+}j){\rm~mod~}2,0}}
    (|\kappa_{S}^{+}\rangle_{r,\overline{r}})^{\delta_{(i{+}j){\rm~mod~}2,1}}~~\text{with}~~\overline{r_{i,j}}{\equiv}r_{i{+}M/2,j{+}L}
  \label{SMeq: 2D: type2}
\end{equation}
(shown in the right part of Fig.~\ref{fig: supp_fig2}), where the operator $\mathcal{T}^{x,M/2}_{\{\overline{r}\}}$ translates all the $\{\overline{r}\}$ sites by $M/2$ units [recall $M$ is even] in the horizontal direction.
With the mapping $\overline{r_{i,j}}{\equiv}r_{i{+}M/2,j{+}L}$, the Hamiltonian can be cast in the form given in Eq.~\eqref{SMeq: 2D XY model Hamiltonian_relabelled} while the operator $\mathcal{C}_{XY}^{2D}$ given in Eq.~\eqref{SMeq: operator_2DXY} still satisfies the required condition Eq.~\eqref{SMeq: condition} [for example, as shown in right part of Fig.~\ref{fig: supp_fig2}, the Hamiltonian for $(M{\times}2L){=}(4 {\times} 4)$ contains terms of the form $(S_{1}^{\mu}S_{2}^{\mu}{+}S_{\overline{1}}^{\mu}S_{\overline{2}}^{\mu}),(S_{1}^{\mu}S_{5}^{\mu}{+}S_{\overline{1}}^{\mu}S_{\overline{5}}^{\mu}),(S_{3}^{\mu}S_{\overline{5}}^{\mu}{+}S_{\overline{3}}^{\mu}S_{5}^{\mu})$ and all of them anti-commute with $\mathcal{C}_{XY}^{2D}$ given in Eq.~\eqref{SMeq: operator_2DXY}].
Consequently $|\Lambda_{2}\rangle_{\mathcal{C}_{XY}}^{2D}$ is a zero energy eigenstate.
Similarly, we can demonstrate that the state
\begin{equation}
   |\Lambda_{3}\rangle_{\mathcal{C}_{XY}}^{2D}{=}\mathbb{I}_{\{r\}}{\otimes}\mathcal{R}^{p,\pi}_{\{\overline{r}\}}|\Lambda_{1}\rangle_{\mathcal{C}_{XY}}^{2D}{=}\bigotimes_{i,j}(|\kappa_{S}^{-}\rangle_{r,\overline{r}})^{\delta_{(i{+}j){\rm~mod~}2,0}}
    (|\kappa_{S}^{+}\rangle_{r,\overline{r}})^{\delta_{(i{+}j){\rm~mod~}2,1}}~~\text{with}~~\overline{r_{i,j}}{\equiv}r_{M{+}1{-}i,2L{+}1{-}j}
   \label{SMeq: 2D type3}
\end{equation}
(shown in the left part of Fig.~\ref{fig: supp_fig2}) is also a zero energy eigenstate of $H_{XY}^{2D}$, where the operator $\mathcal{R}^{p,\pi}_{\{\overline{r}\}}$ rotates all the $\{\overline{r}\}$ sites around the point $p{=}[(M{+}1)/2,(3L{+}1)/2]$ by an angle $\pi$.
We have labelled the states $|\Lambda_{1}\rangle_{\mathcal{C}_{XY}}^{2D}$, $|\Lambda_{2}\rangle_{\mathcal{C}_{XY}}^{2D}$ and $|\Lambda_{3}\rangle_{\mathcal{C}_{XY}}^{2D}$ as type-1, type-2, and type-3 states respectively in Fig~\ref{fig: supp_fig2}, all of which originate from the same configuration shown in Fig.~\ref{fig: supp_fig1}(a).

\begin{figure}
    \centering
    \includegraphics[scale=0.4]{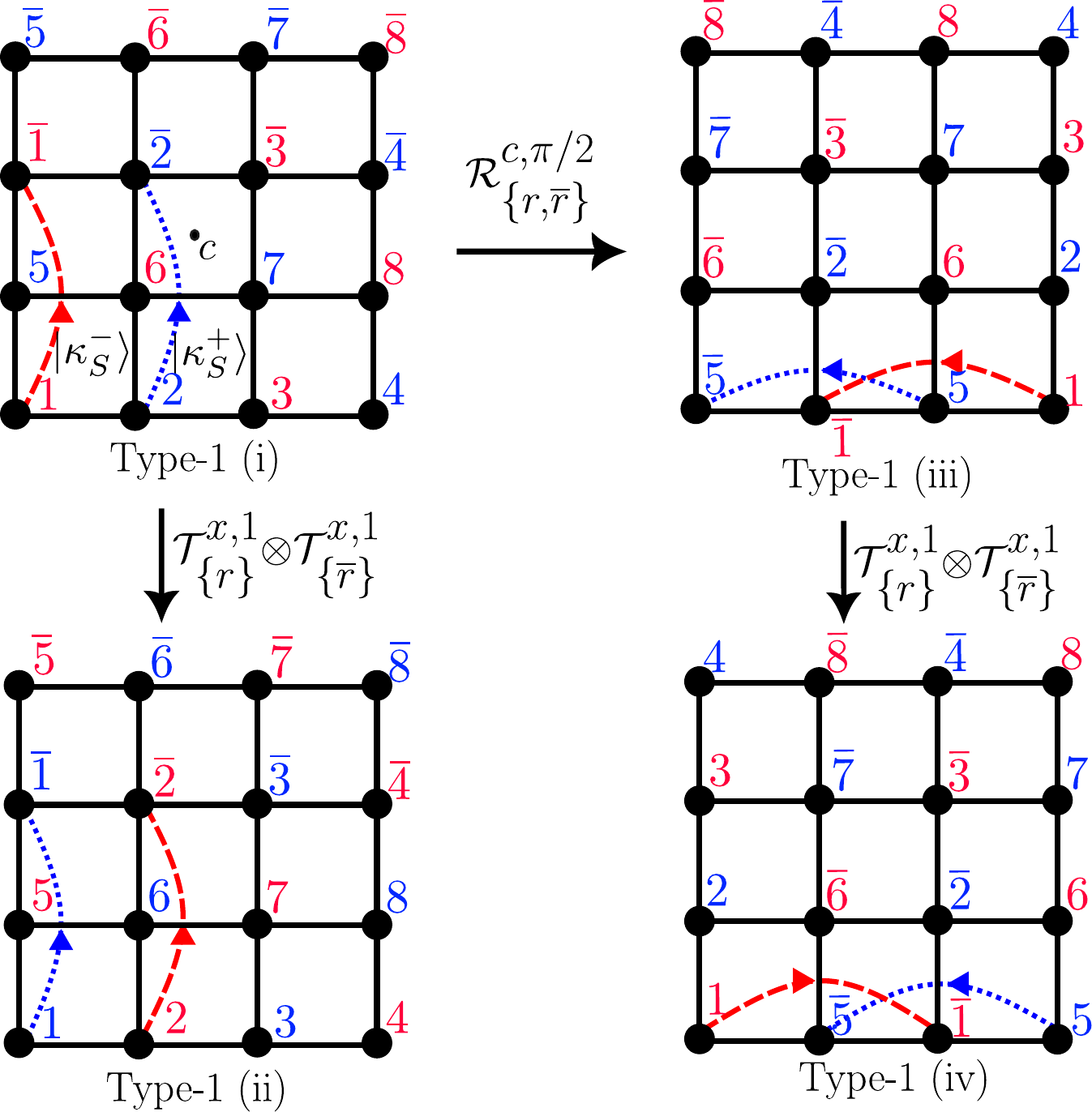}
    \caption{Schematic representation of the different kinds of type-1 zero energy eigenstate for nearest neighbor spin-$S$ $XY$ model with a transverse field on a $M{\times}2L{=}4{\times}4$ square lattice with PBC obtained by applying various symmetry operations on the original state: translation, rotation ($\mathcal{R}^{c,\pi/2}_{\{r,\overline{r}\}}$: rotating all the sites about the center $c{=}[(M{-}1)/2,(2L{-}1)/2]$ of the lattice by an angle $\pi/2$). The black dots denote the sites and the solid lines between them indicate the $XY$ interaction. The spins at red (blue) sites $r$ are entangled only with their partners $\overline{r}$ in the state $|\kappa_{S}^{-}\rangle$ ($|\kappa_{S}^{+}\rangle$) [defined in Eq.~\eqref{SMeq: kappa}].}
    \label{fig: supp_fig3}
\end{figure}

\begin{figure}[h!]
    \centering
    \includegraphics[scale=0.4]{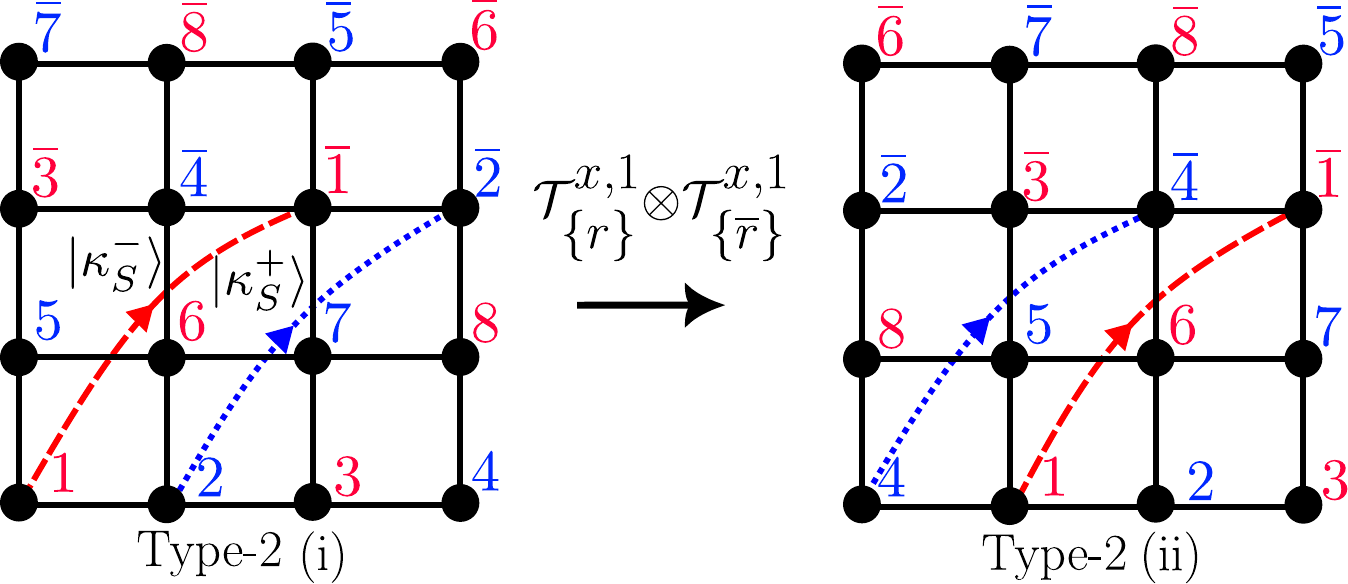}
    \caption{Schematic representation of the different kinds of type-2 zero energy eigenstates for nearest neighbor spin-$S$ $XY$ model with a transverse field on a $M{\times}2L{=}4{\times}4$ square lattice with PBC. The black dots denote the sites and the solid lines between them indicate the $XY$ interaction. The spins at red (blue) sites $r$ are entangled only with their partners $\overline{r}$ in the state $|\kappa_{S}^{-}\rangle$ ($|\kappa_{S}^{+}\rangle$) [defined in Eq.~\eqref{SMeq: kappa}].}
    \label{fig: supp_fig4}
\end{figure}

\begin{figure}[h!]
    \centering
    \includegraphics[scale=0.4]{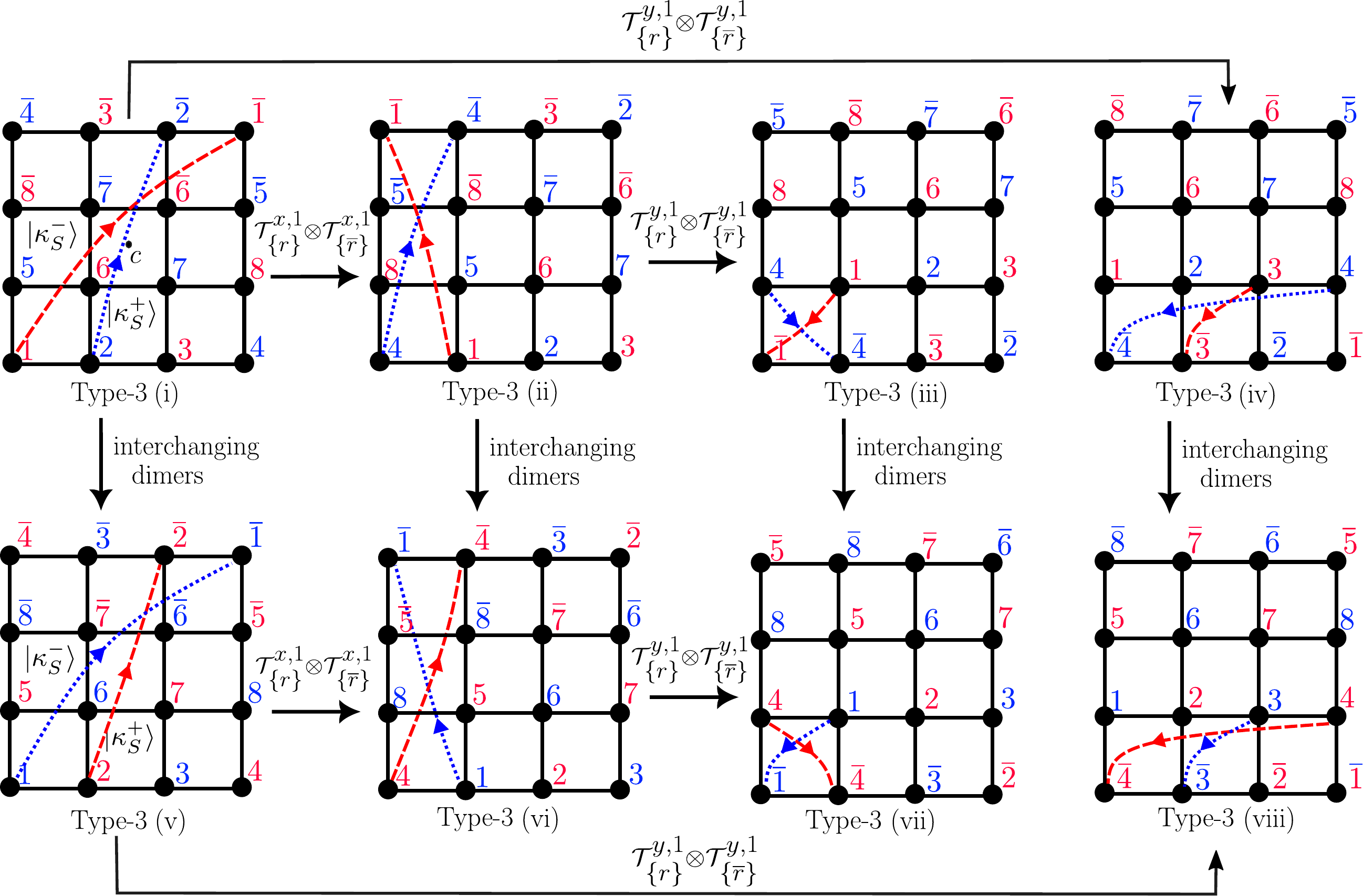}
    \caption{Schematic representation of the different kinds of type-3 of zero energy eigenstate for nearest neighbor spin-$S$ $XY$ model with a transverse field on a $M{\times}2L{=}4{\times}4$ square lattice with PBC. The black dots denote the sites and the solid lines between them indicate the $XY$ interaction. The spins at red (blue) sites $r$ are entangled only with their partners $\overline{r}$ in the state $|\kappa_{S}^{-}\rangle$ ($|\kappa_{S}^{+}\rangle$) [defined in Eq.~\eqref{SMeq: kappa}].}
    \label{fig: supp_fig5}
\end{figure}

In addition, various global symmetry operations, such as translations, rotations, and interchanging the dimers, generate distinct numbers of states for each type. The precise number of states for each type depends on both the symmetries and the structure of the states themselves. Figs.~\ref{fig: supp_fig3}, \ref{fig: supp_fig4}, and \ref{fig: supp_fig5} illustrate the number of states for different types. In general, for an $(M{\times}2L)$ lattice with PBC, there are 
\begin{itemize}
    \item two distinct type-1 states: $|\Lambda_{1}\rangle_{\mathcal{C}_{XY}}^{2D}$ and $(\mathcal{T}^{x,1}_{\{r\}}{\otimes}\mathcal{T}^{x,1}_{\{\overline{r}\}})|\Lambda_{1}\rangle_{\mathcal{C}_{XY}}^{2D}$ \\(additional two states: $\mathcal{R}^{c,\pi/2}_{\{r,\overline{r}\}}|\Lambda_{1}\rangle_{\mathcal{C}_{XY}}^{2D}$ and $(\mathcal{T}^{x,1}_{\{r\}}{\otimes}\mathcal{T}^{x,1}_{\{\overline{r}\}})\mathcal{R}^{c,\pi/2}_{\{r,\overline{r}\}}|\Lambda_{1}\rangle_{\mathcal{C}_{XY}}^{2D}$ when $M{=}2L$, where $\mathcal{R}^{c,\pi/2}_{\{r,\overline{r}\}}$ represents rotating all the sites about the center $c{=}[(2L{-}1)/2,(2L{-}1)/2]$ of the lattice by an angle $\pi/2$),
    \item two distinct type-2 states: $|\Lambda_{2}\rangle_{\mathcal{C}_{XY}}^{2D}$ and $(\mathcal{T}^{x,1}_{\{r\}}{\otimes}\mathcal{T}^{x,1}_{\{\overline{r}\}})|\Lambda_{2}\rangle_{\mathcal{C}_{XY}}^{2D}$,
    \item $ML/2$ distinct type-3 states: $(\mathcal{T}^{x,m}_{\{r\}}{\otimes}\mathcal{T}^{x,m}_{\{\overline{r}\}})(\mathcal{T}^{y,l}_{\{r\}}{\otimes}\mathcal{T}^{y,l}_{\{\overline{r}\}})|\Lambda_{3}\rangle_{\mathcal{C}_{XY}}^{2D}$ with $m{=}\{0,1,{\cdots},M/2{-}1\}$ and $n{=}\{0,1,{\cdots},L{-}1\}$.\\
    Furthermore, interchanging the dimers in each of the above states can result in an additional $ML/2$ states.
\end{itemize}
Thus, it is clear that the total number of zero-energy states scales with the volume of the system $\mathcal{O}(ML)$. Note that even though we have only demonstrated the existence of a large number of type-3 states for the spin-$S$ $XY$ model explicitly, these states also exist for any nearest-neighbor 2D Hamiltonian of the form defined in Eq.~\eqref{SMeq: generic_Hamiltonian_2D} and satisfying the condition given in Eq.~\eqref{SMeq: condition}.

Our ideas can be readily generalized to three and higher dimensions. Certain zero-energy states of the type $|\Lambda\rangle_{\mathcal{C}}$ for the spin-$S$ $XY$ model with a transverse field on a $(M{\times}2L{\times}N)$ cuboid is shown in Fig.~\ref{fig: supp_fig6}.
A systematic investigation of the various classes, in particular, precisely enumerating the number of distinct classes and the corresponding number of states within each class, is an interesting direction that we leave for future exploration.
\begin{figure}
    \centering
    \includegraphics[scale=0.4]{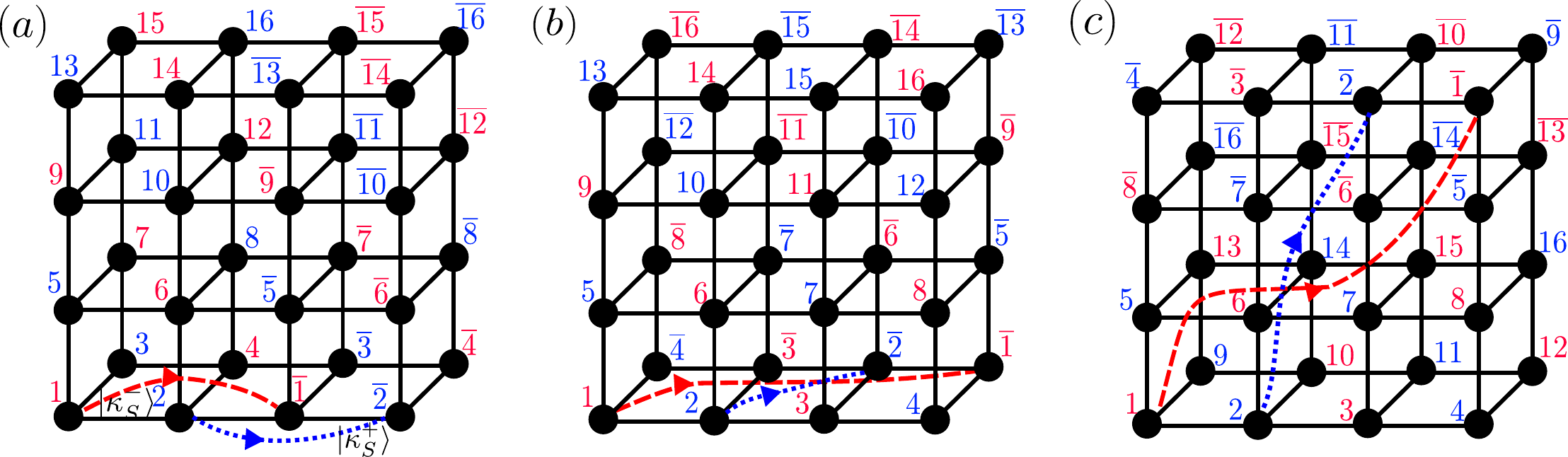}
    \caption{Schematic representation of some types of zero energy eigenstate of nearest neighbor spin-$S$ $XY$ model with a transverse field defined on a $M{\times}2L{\times}N{=}4{\times}4{\times}2$ cuboid with PBC in all directions. The black dots denote the sites and the solid lines between them indicate the $XY$ interaction. The spins at red (blue) sites $r$ are entangled only with their partners $\overline{r}$ in the state $|\kappa_{S}^{-}\rangle$ ($|\kappa_{S}^{+}\rangle$) [defined in Eq.~\eqref{SMeq: kappa}].}
    \label{fig: supp_fig6}
\end{figure}
\subsection{Thermality of Entanglement}
\begin{figure}
    \centering
    \includegraphics[scale=0.4]{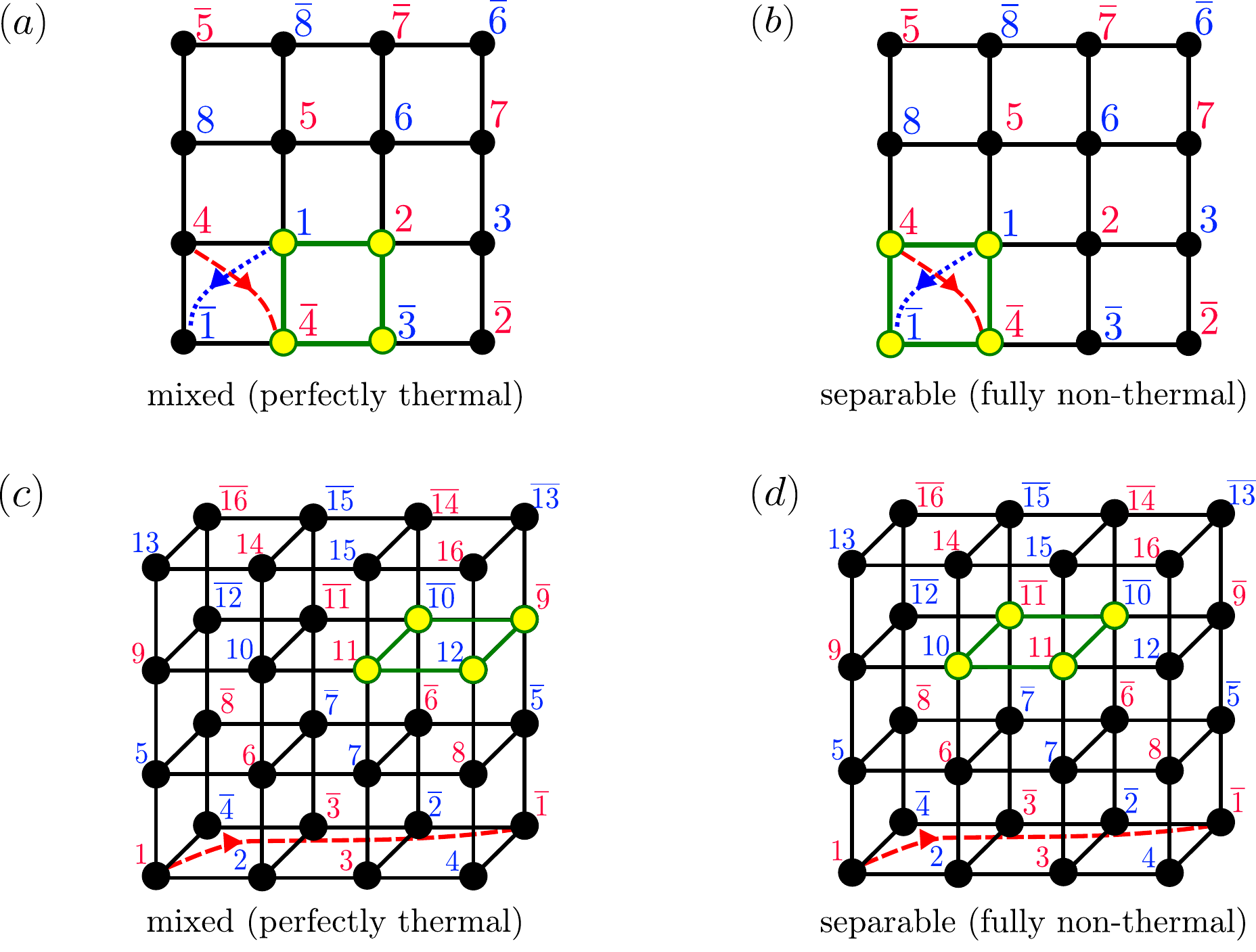}
    \caption{Schematic representation of volume-entangled thermal (left panels) and separable non-thermal (right panels) bipartitions of zero energy eigenstates of the nearest neighbor spin-$S$ $XY$ model with a transverse field. (a) and (b) correspond to an eigenstate on a $M{\times}2L{=}4{\times}4$ square lattice with PBC, while (c) and (d) correspond to an eigenstate on a $M{\times}2L{\times}N{=}4{\times}4{\times}2$ cuboid with PBC [(c) and (d)]. The black dots denote the sites and the solid lines between them indicate the $XY$ interaction. The spins at red (blue) sites $r$ are entangled only with their partners $\overline{r}$ in the state $|\kappa_{S}^{-}\rangle$ ($|\kappa_{S}^{+}\rangle$) [defined in Eq.~\eqref{SMeq: kappa}]. In each case, the bipartition is such that the yellow-colored sites connected by green bonds form a subsystem, and the remaining sites form the complementary subsystem.}
    \label{fig: supp_fig7}
\end{figure}
Note that, unlike in one dimension, not all contiguous bipartitions in higher dimensions exhibit the ``thermal" volume-law scaling of entanglement. For example, consider one of the zero energy eigenstates of $M{\times}2L{=}4{\times}4$ square lattice with PBC-the Type-3 (vii) state as shown in Fig.~\ref{fig: supp_fig5}.

In Fig.~\ref{fig: supp_fig7}(a), we show a contiguous bipartition where the subsystem (yellow-colored sites connected by green bonds) exhibits volume-law entanglement and remains maximally mixed or perfectly thermal. In contrast, Fig.~\ref{fig: supp_fig7}(b) shows a bipartition where the same state becomes separable and non-thermal (subsystem consisting entirely of the top or bottom right squares, or the top or bottom left squares are separable). Similarly, in Fig.~\ref{fig: supp_fig7}(c) and (d), we provide examples of volume-entangled thermal and separable non-thermal bipartitions for an eigenstate on a $M{\times}2L{\times}N{=}4{\times}4{\times}2$ cuboid with PBC in all directions. Intermediate behavior, i.e., neither maximally entangled nor fully separable is also possible, for example, for the state shown in Fig.~\ref{fig: supp_fig7}(a), if the bipartition results in a subsystem consisting of the sites $\{1,\bar{1}, 4, \bar{4}, 5, 8\}$ and the rest of the sites form its complement, there will be finite but non-maximal entanglement between the partitioned regions. These cases further illustrate how the entanglement structure in higher dimensions depends sensitively on the choice of bipartition and the pairing pattern of the eigenstate. A detailed characterization of how many such contiguous bipartitions fail to be maximally mixed in the different kinds of zero-energy eigenstates in higher dimensions, and what their implications are for thermalization, is an interesting question for future explorations.

\section{Reproducing previously constructed eigenstates of various spin models}
\label{sec: previous_examples}
In this section, we apply our formalism to previously studied spin-$1/2$ Hamiltonians in the literature to demonstrate that it can reproduce the same zero energy states as reported in those works. Note that for all of the following examples, we consider the Hilbert space defined by computational basis $|\vec{b}\rangle{=}{\bigotimes}_{i{=}1}^{L}|b_{i}\rangle$ $(\mathcal{H}_{L}{\equiv}\text{span}\{|\vec{b}\rangle\})$, where $\vec{b}{=}(b_{1}, b_{2}, {\cdots}, b_{L})$ is either a string of spins, i.e., $b_j \in \{\uparrow, \downarrow\}$, or a ``bit-string" of length $L$, i.e., $b_j \in \{0, 1\}$.

\subsection{A model with three-spin interaction and transverse magnetic field~\cite{udupa2023weak}}

First, we consider the following Hamiltonian with a three-spin interaction and a transverse magnetic field for a system of size $L$ (with PBC) that was studied in Ref.~\cite{udupa2023weak}
\begin{equation}
\begin{split}
    H_{1}(L){=}\sum_{i{=}1}^{L}\big(h\sigma_{i}^{z}+J\sigma_{i}^{x}\sigma_{i+1}^{x}\sigma_{i+2}^{x}\big),
\end{split}
    \label{SMeq: three_spin_ising}
\end{equation}
where $h$ and $J$ are arbitrary constants. Here each $\hat{H}^{\alpha}_{i+j_{1},\cdots,i+j_{n}}$, i.e., $\sigma_{i}^{x}\sigma_{i+1}^{x}\sigma_{i+2}^{x}$ and $\sigma_{i}^{z}$ are real in the chosen basis and anti-commute with the operator $\mathcal{C}_{1}{=}\prod_{i}^{L} \sigma_{i}^{y}$ [hence they satisfy Eq.~\eqref{SMeq: condition}]. Thus using Theorem~\ref{SMtheorem: 1}, a zero energy eigenstate of Hamiltonian $H_{1}(2L)$ is given by
\begin{equation}
|\Lambda\rangle_{\mathcal{C}_{1}}{=}\bigotimes_{i{=}1}^{L}|\Phi^{-}\rangle_{i,\overline{i}},
\label{SMeq: Lambda_state1}
\end{equation}
where $|\Phi^{\pm}\rangle_{i,j}{=}(|01\rangle {\pm} |10\rangle)_{i,j}/\sqrt{2}$.
Similarly, when the half-chain length $L$ is a multiple of three, the operator $\mathcal{C}_{1}'{=}\prod_{p{=}1}^{\frac{L}{3}} \sigma_{3p{-}2}^{x}\sigma_{3p{-}1}^{y}\sigma_{3p}^{x}$ anti-commutes with each $\hat{H}_{i{+}j_{1}, {\cdots}, i{+}j_{n}}^{\alpha}$ [thus satisfying Eq.~\eqref{SMeq: condition}] resulting in another zero energy eigenstate of the form
\begin{equation}
    |\Lambda\rangle_{\mC_{1}'}{=}\bigotimes_{i{=}1}^{L{-}2}|\Phi^{+}\rangle_{i,\overline{i}}|\Phi^{-}\rangle_{i{+}1,\overline{i{+}1}}|\Phi^{+}\rangle_{i{+}2,\overline{i{+}2}}.
    \label{SMeq: Lambda_state1p}
\end{equation}
Note that due to PBC the translations of the state $|\Lambda\rangle_{\mC_{1}'}$ by one site, i.e., $|\Lambda^{\mathcal{T}_1}\rangle_{\mC_{1}'}{=}\bigotimes_{i{=}1}^{L{-}2}|\Phi^{-}\rangle_{i,\overline{i}}|\Phi^{+}\rangle_{i{+}1,\overline{i{+}1}}|\Phi^{+}\rangle_{i{+}2,\overline{i{+}2}}$ and two sites i.e. $|\Lambda^{\mathcal{T}_2}\rangle_{\mC_{1}'}{=}\bigotimes_{i{=}1}^{L{-}2}|\Phi^{+}\rangle_{i,\overline{i}}|\Phi^{+}\rangle_{i{+}1,\overline{i{+}1}}|\Phi^{-}\rangle_{i{+}2,\overline{i{+}2}}$ are also distinct zero energy eigenstates of $H_{1}(2L)$. The states of Eqs.~\eqref{SMeq: Lambda_state1} and \eqref{SMeq: Lambda_state1p} are exactly the volume-law entangled states reported in Ref.~\cite{udupa2023weak}. 

\subsection{PXP model~\cite{ivanov2024volume}}

Next, we reproduce the volume-law state constructed in the PXP model in Ref.~\cite{ivanov2024volume} using our framework. The PXP model for a spin chain of size $L$ with PBC is described by the Hamiltonian~\cite{Turner2018weak}
\begin{equation}
    H_{\rm PXP}(L)=\sum_{i{=}1}^{L} \hat{P}_{i{-}1}\sigma_{i}^{x}\hat{P}_{i{+}1}=\sum_{i{=}1}^{L} \Big[\sigma_{i}^{x}{+}\sigma_{i}^{x}\sigma_{i{+}1}^{z}{+}\sigma_{i{-}1}^{z}\sigma_{i}^{x}{+}\sigma_{i{-}1}^{z}\sigma_{i}^{x}\sigma_{i{+}1}^{z}\Big].
    \label{SMeq: PXP_Hamiltonian}
\end{equation}
Here, we used $\hat{P}_{i}{=}(\mathbb{I}{+}\sigma_{i}^{z})/2$ to write the Hamiltonian in the form given in Eq.~\eqref{SMeq: generic_Hamiltonian}. Now in our chosen computational basis the global particle-hole symmetry operator $\mathcal{C}_{\rm PXP}{=}\prod_{i{=}1}^{L}\sigma_{i}^{z}$ satisfies the conditions stated in Eq.~\eqref{SMeq: condition}. Thus, using Theorem~\ref{SMtheorem: 1}, a zero energy eigenstate of $H_{\rm PXP}(2L)$ is
\begin{equation}
    |\Lambda\rangle_{\rm PXP}{=}\bigotimes_{i{=}1}^{L}|\Psi^{-}\rangle_{i,\overline{i}}{=}
    \sum_{|\vec{b}\rangle\in\mathcal{H}_{L}} ({-}1)^{|b|}|\vec{b}\rangle_{1,\cdots,L}  \otimes |\vec{b}\rangle_{\overline{1},\cdots,\overline{L}},
    \label{SMeq: Lambda_PXP}
\end{equation}
where we have used $\mC_{\rm PXP}|\vec{b}\rangle{=}({-}1)^{|b|}|\vec{b}\rangle$ with $|b|$ denoting the parity of the bit string $\vec{b}$ and $|\Psi^{-}\rangle_{i,j}{=}(|00\rangle {-} |11\rangle)_{i,j}/\sqrt{2}$

While $\ket{\Lambda}_{\rm PXP}$ is an eigenstate of the PXP model in the full $2^L$-dimensional Hilbert space, the Hilbert space of the PXP model is usually truncated to the so-called nearest-neighbor Rydberg blockade Hilbert space~\cite{Turner2018weak}. This truncated blockade subspace $\mathcal{F}_{L}$ for chains of $L$ spins with PBC is spanned by basis vectors $|\vec{f}\rangle$ whose bit-strings $\vec{f}$ do not contain the sub-bit-string ``11". The projector on this subspace defined by $\mathcal{P}_{\text{Ryd}}{=}\prod_{i}(\mathbb{I}{-}\hat{Q}_{i}\hat{Q}_{i{+}1})$ (i.e., $\mathcal{P}_{\text{Ryd}}\mH_{L}{=}\mathcal{F}_{L})$, where $\hat{Q}_{i}{=}\mathbb{I}{-}\hat{P}_{i}$, commutes with the Hamiltonian $H_{\rm PXP}$, which enables the truncation of this Hilbert space for the PXP Hamiltonian. Thus, the projection of the state $|\Lambda\rangle_{\rm PXP}$ into the Rydberg blockade sector should also be an eigenstate of the PXP model truncated into this Hilbert space, and this state reads
\begin{equation}
    |\Lambda\rangle_{\rm Ryd}{=}\mathcal{P}_{\text{Ryd}}|\Lambda\rangle_{\rm PXP}{=}\sum_{|\vec{f}\rangle\in \mathcal{F}_{L}}({-}1)^{|f|}|\vec{f}\rangle_{1,\cdots, L}\otimes |\vec{f}\rangle_{L{+}1,\cdots, 2L}.
    \label{SMeq: PXP_eigenstate_Rydberg_subspace}
\end{equation}
This is precisely the volume-law eigenstate reported in Ref.~\cite{ivanov2024volume} for the PXP chain. We note that this construction also works for the PXP model in arbitrary dimensions and in a wide variety of geometries, which then reproduces the results of Ref.~\cite{ivanov2024volume}.

\subsection{Entangled antipodal paired (EAP) state~\cite{chiba2024exact}}

Next, we apply our formalism to a next-nearest-neighbor interacting Hamiltonian for a system of size $L$ with PBC considered in Ref.~\cite{chiba2024exact} given by
\begin{equation}
\begin{split} 
    H_{\rm NNN}(L){=}\sum_{i{=}1}^{L} \Big( J^{xzx}\sigma_{i}^{x}\sigma_{i{+}1}^{z}\sigma_{i{+}2}^{x}{+}J^{yzy}\sigma_{i}^{y}\sigma_{i{+}1}^{z}\sigma_{i{+}2}^{y}{+}J^{xzy}\sigma_{i}^{x}\sigma_{i{+}1}^{z}\sigma_{i{+}2}^{y}{+}J^{yzx}\sigma_{i}^{y}\sigma_{i{+}1}^{z}\sigma_{i{+}2}^{x}{+}\\
J^{zzz}\sigma_{i}^{z}\sigma_{i{+}1}^{z}\sigma_{i{+}2}^{z}{+}
J^{xx}\sigma_{i}^{x}\sigma_{i{+}1}^{x}{+}
J^{yy}\sigma_{i}^{y}\sigma_{i{+}1}^{y}{+}
J^{xy}\sigma_{i}^{x}\sigma_{i{+}1}^{y}{+}
J^{yx}\sigma_{i}^{y}\sigma_{i{+}1}^{x}{+}
J^{z}\sigma_{i}^{z}\Big),
\end{split}
\end{equation}
where all $J$'s are arbitrary. In our chosen computational basis, the Hamiltonian consists of terms $\hat{H}^{\alpha}_{i+j_{1},\cdots,i+j_{n}}$ that are either purely real (i.e., $\sigma_{i}^{x}\sigma_{i{+}1}^{z}\sigma_{i{+}2}^{x}$, $\sigma_{i}^{y}\sigma_{i{+}1}^{z}\sigma_{i{+}2}^{y}$, $\sigma_{i}^{z}\sigma_{i{+}1}^{z}\sigma_{i{+}2}^{z}$, $\sigma_{i}^{x}\sigma_{i{+}1}^{x}$, $\sigma_{i}^{y}\sigma_{i{+}1}^{y}$ and $\sigma_{i}^{z}$) or purely imaginary (i.e., $\sigma_{i}^{x}\sigma_{i{+}1}^{z}\sigma_{i{+}2}^{y}$, $\sigma_{i}^{y}\sigma_{i{+}1}^{z}\sigma_{i{+}2}^{x}$,  
$\sigma_{i}^{x}\sigma_{i{+}1}^{y}$ and $\sigma_{i}^{y}\sigma_{i{+}1}^{x}$). When $L$ is even, all real $\hat{H}^{\alpha}_{i+j_{1},\cdots,i+j_{n}}$ {\it anticommute} with an operator $\mathcal{C}_{\rm NNN}{=}\prod_{p{=}1}^{\frac{L}{2}}\sigma_{2p{-}1}^{x}\sigma_{2p}^{y}$ while all purely imaginary $\hat{H}^{\alpha}_{i+j_{1},\cdots,i+j_{n}}$ {\it commute} with $\mathcal{C}_{\rm NNN}$ satisfying the condition given in Eq.~\eqref{SMeq: condition}. Thus, as a result of Theorem~\ref{SMtheorem: 1}, the Hamiltonian $H_{\rm NNN}(2L)$ hosts a zero energy eigenstate given by
\begin{equation}
    |\Lambda\rangle_{\rm NNN}{=}\bigotimes_{i{=}1}^{L{-}1}|\Phi^{-}\rangle_{i,\overline{i}}|\Phi^{+}\rangle_{i{+}1,\overline{i{+}1}},
    \label{SMeq: EAP_eigenstate}
\end{equation}
where $|\Phi^{\pm}\rangle_{i,j}{=}(|01\rangle {\pm} |10\rangle)_{i,j}/\sqrt{2}$. Note that due to PBC, the translation of the state by one site, given by 
\begin{equation}
|\Lambda^{\mathcal{T}_1}\rangle_{\rm NNN}{=}\bigotimes_{i{=}1}^{L{-}1}|\Phi^{+}\rangle_{i,\overline{i}}|\Phi^{-}\rangle_{i{+}1,\overline{i{+}1}}   
\end{equation}
produces another distinct zero energy eigenstate (orthogonal to $|\Lambda\rangle_{\rm NNN}$). These are precisely the entangled antipodal paired (EAP) states [named so since there are Bell-pairs between antipodal points when the lattice sites are arranged uniformly on a circle] identified in Ref.~\cite{chiba2024exact}.

\end{document}